\def\@fpheader{\relax}
\newcommand{\be}{\begin{equation}}
\newcommand{\ee}{\end{equation}}
\newcommand{\bea}{\begin{eqnarray}}
\newcommand{\eea}{\end{eqnarray}}
\newcommand{\beas}{\begin{eqnarray*}}
\newcommand{\eeas}{\end{eqnarray*}}
\newcommand{\ba}{\begin{array}}
\newcommand{\ea}{\end{array}}
\renewcommand*\d[2][]{%
	\mathrm{d}%
	\ifx\relax#1\relax\else
	\rule{-0.02em}{1.5ex}^{#1}\rule{0.08em}{0ex}\!
	\fi
	#2\,
}
\newtheorem*{lemma}{Lemma}
\newcommand{\Dbar}{\bar \Delta}
\newcommand{\lam}{\lambda}
\newcommand{\lamt}{\Tilde{\lambda}}
\newcommand{\vt}{\Tilde{v}}
\newcommand{\ut}{\Tilde{u}}
\newcommand{\Vt}{\Tilde{V}}
\newcommand{\Ut}{\Tilde{U}}
\newcommand{\yt}{\Tilde{y}}
\newcommand{\um}{u^{(-)}}
\newcommand{\up}{u^{(+)}}
\newcommand{\utm}{\Tilde u^{-}}
\newcommand{\utp}{\Tilde u^{+}}
\newcommand{\jmax}{j_\text{max}}
\newcommand{\afrw}{\tilde{a}_\textsc{FRW}}
\title{Can one hear the shape of a wormhole?}
\author[1]{Stefano Antonini,}
\author[2]{Petar Simidzija,}
\author[1,3]{Brian Swingle,}
\author[2]{Mark Van Raamsdonk}
\affiliation[1]{Maryland Center for Fundamental Physics, University of Maryland, College Park, MD 20742, USA}
\affiliation[2]{Department of Physics and Astronomy, University of British Columbia, Vancouver, B.C.\ V6T 1Z1, Canada.}
 \affiliation[3]{Brandeis University, Waltham, MA 02453, USA}
\emailAdd{santonin@umd.edu}
\emailAdd{psimidzija@phas.ubc.ca}
\emailAdd{bswingle@brandeis.edu}
\emailAdd{mav@phas.ubc.ca}
\abstract{A large class of flat big bang - big crunch cosmologies with negative cosmological constant are related by analytic continuation to asymptotically AdS traversable wormholes with planar cross section. In recent works (arXiv: 2102.05057, 2203.11220) it was suggested that such wormhole geometries may be dual to a pair of 3D holographic CFTs coupled via auxiliary degrees of freedom to give a theory that confines in the infrared. In this paper, we explore signatures of the presence of such a wormhole in the state of the coupled pair of 3D theories. We explain how the wormhole geometry is reflected in the spectrum of the confining theory and the behavior of two-point functions and entanglement entropies. We provide explicit algorithms to reconstruct the wormhole scale factor (which uniquely determines its geometry) from entanglement entropies, heavy operator two-point functions, or light operator two-point functions (which contain the spectrum information). In the last case, the physics of the bulk scalar field dual to the light operator is closely related to the quantum mechanics of a one-dimensional particle in a potential derived from the scale factor, and the problem of reconstructing the scale factor from the two-point function is directly related to the problem of reconstructing this Schr\"odinger potential from its spectrum.}
\keywords{}
\begin{document}

\maketitle
\newpage

\section{Introduction}

In this paper, we study the possible holographic description of Lorentzian geometries with two asymptotically AdS planar boundaries, with metric of the form
\begin{equation}
    ds^2=a^2(z)(dz^2-dt^2+dx^2+dy^2),
    \label{wormholemetric}
\end{equation}
where $z\in(-z_0,z_0)$ and $a(z)$ has simple poles at the locations $z=\pm z_0$ of the two asymptotic AdS boundaries. Such geometries make an intriguing appearance as the double analytic continuation of time-reversal symmetric $\Lambda < 0$ big-bang big crunch cosmologies. In \cite{VanRaamsdonk:2021qgv,Antonini2022,Antonini2022short}, following \cite{Maldacena:2004rf} we argued that understanding the holographic description of such wormholes may lead to microscopic models of cosmological physics, perhaps even relevant to our universe.

In this paper, we will assume that gravitational solutions of this type exist\footnote{Planar traversable wormholes require an anomalously large amount of negative energy to exist \cite{Freivogel:2019lej,VanRaamsdonk:2021qgv,Antonini2022}. An example mechanism leading to such enhanced negative energy was given in \cite{May:2021xhz}, and additional evidence for the existence of the solutions of our interest was presented in \cite{Antonini2022}.} and can be described holographically, and we will attempt to understand the required properties of this holographic description. The pair of asymptotically AdS boundaries in the wormhole suggests a holographic description involving a pair of CFTs. These must be highly entangled with each other in the state corresponding to the wormhole since the two asymptotic regions are connected in the interior. Since it is possible to travel from one boundary to the other causally through the spacetime, these CFTs must also be interacting. 

Various recent works \cite{Gao2016,Maldacena:2017axo, Maldacena:2018lmt, Maldacena:2018gjk, Freivogel:2019lej} have suggested specific ways to obtain traversable wormholes by coupling a pair of holographic CFTs. In \cite{VanRaamsdonk:2021qgv, Antonini2022} it was proposed that four-dimensional examples with geometry (\ref{wormholemetric}) might arise by coupling a pair of 3D CFTs via an auxiliary four-dimensional field theory. The 4D theory has many fewer local degrees of freedom than the 3D theories, but (via a renormalization group flow) strongly modifies the IR physics so that the IR theory is confining, with a ground state in which the two 3D CFTs are strongly entangled.

In this paper, we will be somewhat agnostic about how the CFTs associated with the two asymptotically AdS regions are coupled and ask instead how the wormhole physics is reflected in the state of the CFT degrees of freedom. We ask two general questions about the relationship between planar traversable wormhole geometries and observable properties in the dual field theory.

\paragraph{Question 1:} \textit{Given a microscopic setup, what features of the observables signal the presence of a dual eternal traversable wormhole?}

This first question is the subject of discussion in Section \ref{sec:CFT_from_wormhole}. We identify multiple signatures of the presence of a wormhole on the observables of the dual microscopic theory. 

First, the theory has a discrete mass spectrum of particles characteristic of a confining gauge theory. In the wormhole geometry, the two asymptotic boundaries correspond to the UV of the field theory, while the middle of the wormhole corresponds to an IR ``end'' of the geometry that is a finite distance from any interior point. Having an IR end characterizes a geometry whose dual field theory is confining.\footnote{See e.g. \cite{Witten1998a,Sakai:2004cn} for previous models of holographic confining gauge theories. In Witten's original model, we have an internal $S^1$ contracting smoothly to zero at the IR end; in the wormhole geometry, we have an $S^0$ that contracts smoothly to zero.}
With this feature, the various bulk fields exhibit a discrete set of modes that correspond to towers of particles (glueballs, etc...) with a discrete set of masses in the dual field theory. 

In sections \ref{sec:particle_spectrum} and \ref{sec:gauge_field}, we study in detail the spectra that arise from scalar fields and gauge fields in the wormhole and how these depend on the scale factor $a(z)$. Scalar fields in the wormhole geometry with normalizable boundary conditions can be decomposed into components associated with a discrete set of radial wavefunctions $u_i(z)$. These solve an auxiliary Schrodinger problem with potential
\be
\label{Schrodinger1}
V(z) = \frac{a''(z)}{a(z)} + m^2 a^2(z) \; .
\ee
The associated eigenvalues $\lambda_i$ give the values of the mass squared for the associated scalar particles in the 3D theory.  

When a $U(1)$ gauge field is present in the bulk theory, it gives rise to an evenly-spaced discrete mass spectrum of vector particles in the dual field theory as well as a massless sector. Depending on the boundary conditions for the gauge field, the massless sector is given either by a massless scalar field (which can be interpreted as the Goldstone boson for the spontaneous breaking of a global $U(1)\times U(1)$ symmetry down to a single $U(1)$ symmetry, induced by the coupling between the two 3D CFTs), or by a massless gauge field (associated to a gauged residual $U(1)$ symmetry in the 3D CFTs). The existence of the massless sector non-trivially implies the existence of long-range correlations in the vacuum state of our confining theory.

In Section \ref{sec:two_point_functions} we describe how the wormhole geometry implies a specific behavior for the two-point functions of scalar operators in the confining field theory, providing expressions for the one-sided and two-sided two-point functions in terms of the wormhole scale factor. 

The wormhole geometry also has implications for the behavior of entanglement entropies of subregions of the microscopic theory, as we describe in Section \ref{sec:ententropy}. For example, considering the entanglement entropy for a region that includes a ball of radius $R$ in each CFT, we expect a phase transition as the radius is increased past some critical radius where the RT surface changes from being disconnected to being connected.

The second question that we ask is the following:

\paragraph{Question 2:}\textit{How can the wormhole geometry be reconstructed from microscopic observables?}

We address this question in Section \ref{sec:wormhole_from_CFT}, where we identify three different observables from which to retrieve the behavior of the scale factor $a(z)$ appearing in equation (\ref{wormholemetric}). 

In Sections \ref{sec:spectrum_from_correlator} -\ref{sec:reconstruction_example} we show how to reconstruct the scale factor using the mass spectra of scalar particles in the confining gauge theory arising from scalar fields in the bulk geometry. The spectrum associated with a given scalar field can be extracted from the two-point function of the corresponding scalar operator, as explained in Section \ref{sec:spectrum_from_correlator}. Using the spectral information associated with a single massless scalar,\footnote{The spectrum from a single scalar with $m^2 < 0$ (and satisfying the Breitenlohner-Freedman bound) may also be sufficient, though we only have numerical evidence for this.} or any two scalar fields of arbitrary different masses\footnote{The spectrum of any scalar is enough to reconstruct the potential $V(z)$ in \ref{Schrodinger1}, but there can be multiple scale factors that result in this same potential. Having spectra from scalar fields with two different masses is always enough to fix $a(z)$ uniquely, though the ambiguity can likely be resolved with much less additional information, for example by looking at the short-distance behavior of a CFT two-point function or regularized entanglement entropy.}, we can reconstruct the scale factor.\footnote{The reconstruction is precise in the limit where the bulk theory is free.}
The problem of reconstructing the wormhole geometry from mass spectra reduces to the question of whether, given the discrete spectrum of a Schr\"odinger equation, we can reconstruct the potential $V(z)$ which generates it. This is a particular case of an inverse Sturm-Liouville problem \cite{Levitan-Gasymov,Hald}, a well-known mathematical question which, in its two-dimensional version, is frequently formulated as ``can one hear the shape of a drum?'' \cite{kac1966}. Although the answer to this question is negative for generic potentials, the symmetry of our configuration and the AdS asymptotics turn out to be enough to render the inverse problem solvable. We provide an explicit algorithm in Section \ref{sec:reconstruction_algorithm_summary}, first to reconstruct the Schr\"odinger potential appearing in (\ref{Schrodinger1}) and then using the potentials for a pair of different mass scalars (or a single potential for $m^2 \leq 0$) to reconstruct the scale factor. We give an explicit example in section \ref{sec:reconstruction_example}.

The second microscopic observable we can use to reconstruct the wormhole geometry is the two-point function of 3D CFT operators with large scaling dimension, corresponding to heavy bulk scalar fields (see Section \ref{sec:metric_from_heavy}). Such correlators can be evaluated in the geodesic approximation \cite{Faulkner:2018faa} leading to a functional dependence on the scale factor $a(z)$. Inverting this relationship by solving an integral equation allows us to reconstruct the wormhole metric in terms of microscopic correlators. 

The third observable we consider, in Section \ref{sec:metric_from_entanglement}, is the entanglement entropy of a strip-shaped subsystem in one of the two 3D CFTs. Similarly to the analysis for heavy correlators, the area of the Ryu-Takayangi (RT) surface \cite{Ryu2006} associated with such subregions has a functional dependence on the scale factor. Once inverted, this relationship yields the wormhole scale factor in terms of the entanglement entropy of the dual microscopic theory.

\subsubsection*{Applications to cosmology}

As we have discussed above, the holographic description of planar asymptotically AdS wormhole geometries studied in this paper is an essential part of the framework introduced in  \cite{VanRaamsdonk:2021qgv,Antonini2022,Antonini2022short} for describing certain time-symmetric $\Lambda < 0$ Big Bang-Big Crunch cosmologies holographically.
In that setup, the quantum state encoding the cosmological spacetime and the quantum state encoding the wormhole spacetime arise from two different slicings of the same Euclidean field theory path integral, and the observables in the two pictures are related by an analytic continuation of two spacetime coordinates. In particular, the cosmological scale factor in conformal time is the analytic continuation of the wormhole scale factor $a(z)$, and so a complete knowledge of the wormhole geometry implies a complete knowledge of the cosmological evolution. Therefore, the wormhole reconstruction procedures developed in Section \ref{sec:wormhole_from_CFT} imply that the cosmological scale factor can be reconstructed from simple confining field theory observables as well.\footnote{Probing the cosmological evolution without relying on the connection to the wormhole is a highly complex task \cite{VanRaamsdonk:2021qgv,Antonini2022,Cooper2018,Antonini2019} (see also \cite{Antonini:2021xar,Almheiri:2018ijj} for lower-dimensional examples, and \cite{Fallows:2022ioc,Waddell:2022fbn} for related discussions), so this approach gives an explicit example of the power of the ``slicing duality'' introduced in \cite{VanRaamsdonk:2021qgv,Antonini2022,Antonini2022short}} We will comment further on these aspects in Section \ref{sec:reconstruction_algorithm_summary} and in Section \ref{sec:discussion}, where we give our concluding remarks.

\section{From the wormhole to the microscopic theory}
\label{sec:CFT_from_wormhole}

Let us consider a planar traversable wormhole geometry 
\begin{align}\label{eq:wormhole}
    ds^2 = a^2(z)(-dt^2+dx^2+dy^2+dz^2),
\end{align}
with $t,x,y\in R$, and $z\in(-z_0,z_0)$ a coordinate with finite range. We assume that the wormhole connects two asymptotically AdS regions at $z=\pm z_0$, near which the scale factor $a(z)$ has the asymptotic form
\begin{align}\label{eq:a_asymp_ads}
    a(z) \sim \frac{L}{z_0\pm z} = \frac{\sqrt{-3/\Lambda}}{z_0\pm z}
\end{align}
where $\Lambda<0$ is the cosmological constant and $L$ is the AdS radius. We will also assume that the wormhole is symmetric, $a(z)=a(-z)$. As explained above and discussed in detail in \cite{Antonini2022,Antonini2022short}, such a geometry arises via a double analytic continuation from a large class of spatially flat, time-reversal symmetric Friedmann-Roberston-Walker (FRW) cosmologies with  $\Lambda < 0$.

Interestingly, while the state of the cosmological effective field theory in these models is highly excited and thus generally difficult to study, the corresponding state in the wormhole picture is simply the canonical vacuum state. This allows one to study a wide class of cosmological observables (e.g. density perturbations, the CMB spectrum, etc.) by studying the vacuum physics in the traversable wormhole background. Moreover, since the wormhole connects two asymptotically AdS regions, we expect the physics in the wormhole to be dual to a pair of non-gravitational, microscopic 3D CFTs living on the two AdS boundaries $\mathbb R^{1,2}$.\footnote{Since the two boundaries are causally connected in the bulk via the traversable wormhole, the microscopic 3D CFTs must be coupled in such a way that allows information to pass from one theory to the other. In \cite{VanRaamsdonk:2021qgv,Antonini2022} such a coupling is provided by a 4D theory with many fewer degrees of freedom, where the extra dimension is a compact interval. The resulting theory flows in the IR to a gapped 3D theory.} Our goal in this paper will be to better understand how the effective low energy physics in the wormhole is related to the underlying microscopic physics of the two coupled 3D CFTs. As such, this question does not refer to the cosmology picture, and thus could be of interest outside the cosmological context.

\subsection{Scalar particle spectrum from bulk scalar field}
\label{sec:particle_spectrum}

Let us begin by considering the simple situation in which the effective field theory in the wormhole consists of a single free scalar field $\phi$ of mass $m$.\footnote{This would be expected to hold precisely in a strict large $N$ limit of the dual field theory and approximately for large but finite $N$.} The vacuum of the scalar field corresponds to the vacuum state of the dual microscopic theory. We will compute the excitation spectrum of the scalar in the wormhole, and thus obtain the mass spectrum of particles in the microscopic theory. 

The excitation spectrum of the field $\phi$ on the wormhole background can be found by canonically quantizing $\phi$. We begin by finding a complete set of mode solutions to the wave equation $(\Box -m^2)\phi = 0$. Translation invariance in the $x, y$, and $t$ directions ensures a basis of solutions of the form
\begin{align}\label{eq:f_mode_functions}
    f_{k_x,k_y,i}(t,x,y,z) &= \frac{1}{2\pi\sqrt{2\omega}a(z)}e^{-i \omega t} e^{i k_x x} e^{i k_y y} u_i(z),
\end{align}
which, after substituting into the wave equation, gives a Schr\"odinger equation for the modes $u_i$, namely
\begin{align}\label{eq:SE}
    -u_i''(z)+ V(z)u_i(z)= \lambda_i u_i(z).
\end{align}
The frequencies $\omega$ are related to the transverse momenta $k_x,k_y$ and the eigenvalues $\lambda_i$ via
\begin{align}\label{eq:phi_energy_spectrum}
    \omega = \sqrt{k_x^2+k_y^2+\lambda_i},
\end{align}
and the Schr\"odinger potential $V(z)$ is defined in terms of the wormhole scale factor $a(z)$ by
\begin{align}\label{eq:V(z)}
    V(z) \equiv \frac{a''(z)}{a(z)}+m^2a^2(z).
\end{align}
Since the scale factor, by assumption, has the symmetry $a(z)=a(-z)$, the same is true for the potential, $V(z)=V(-z)$. Furthermore the scale factor has the asymptotic AdS form \eqref{eq:a_asymp_ads} near the boundaries, and so the potential is of the asymptotic form
\begin{align}\label{eq:V_asymptotics}
    V(z) &\sim \frac{\alpha}{(z_0\pm z)^2},\\
    \alpha &\equiv 2-3m^2/\Lambda.
\end{align}
From the Schr\"odinger equation we see that the two possibilities for the asymptotic behavior of $u(z)$ are either $u(z) \sim (z_0\mp z)^{\Delta_+}$ or $u(z) \sim (z_0\mp z)^{\Delta_-}$, where
\begin{align}
    \Delta_{\pm} &\equiv \frac{1\pm \sqrt{1+4\alpha}}{2}.
\end{align}
We call the former solutions \textit{normalizable}, and the latter \textit{non-normalizable}\footnote{Note that such denomination is meaningful only for $\alpha\geq 3/4$. For $-1/4<\alpha<3/4$, both solutions are normalizable in the $L^2$ norm, and there are two non-equivalent quantizations of the scalar field denominated ``standard'' and ``alternative'' quantizations, depending on whether we quantize the ``normalizable'' or ``non-normalizable'' part of the scalar field. When this ambiguity is present, we will focus here on the standard quantization scheme.}.

The canonical quantization of the scalar field is obtained by writing it as a mode sum
\begin{align}\label{eq:field_expansion}
    \hat{\phi}(t,x,y,z) = \sum_{i}\int dk_x \int dk_y
    \Big[
    f_{k_x,k_y,i}(t,x,y,z) \hat{a}_{k_x,k_y,i} + f^*_{k_x,k_y,i}(t,x,y,z) \hat{a}^\dagger_{k_x,k_y,i}
    \Big],
\end{align}
where $\hat{a}_{k_x,k_y,i}$ and $\hat{a}^\dagger_{k_x,k_y,i}$ are canonically commuting creation and annihilation operators for the modes $f_{k_x,k_y,i}$, and the vacuum state of the theory is defined as the state $\ket 0$ which is annihilated by all the annihilation operators. We will assume that the modes appearing in the sum are normalizable, so that the excitations created by the creation operators are normalizable excitations.\footnote{This implies that the states created by acting on the vacuum with the creation operators have position space wavefunctions which are $L^2$ normalizable.} Since we have chosen the modes $f$ to be positive frequency with respect to the timelike Killing vector field $\partial_t$, this canonical vacuum state is the lowest energy state with respect to the Hamiltonian generating time translations in $t$. 

In order for the field $\phi$ to canonically commute with its conjugate momentum field, it is necessary and sufficient that the mode functions $f_{k_x,k_y,i}$ are orthonormal in the Klein-Gordon inner product. This implies the normalization condition 
\begin{align}\label{eq:u_normalization}
    \int_{-z_0}^{z_0} dz \,u_{i}(z)u_{j}(z) = \delta_{ij}.
\end{align}
Since the $z$ coordinate has a finite range, the set of normalizable solutions $u_i$ to the Schr\"odinger equation is discrete, justifying the use of the discrete label $i=0,1,2,\dots$.

Having canonically quantized the scalar field, we can read off its energy spectrum. The Hamiltonian for a free scalar field is 
\begin{align}
    \hat{H} = \sum_i \int dk_x\int dk_y\, \omega_{k_x,k_y,i} \, \left(\hat{a}^\dagger_{k_x,k_y,i}\hat{a}_{k_x,k_y,i}+\frac{1}{2}\right),
\end{align}
and thus, from Eq. \eqref{eq:phi_energy_spectrum}, the energy spectrum of single particle excitations of the field $\phi$ is given by $\omega = \sqrt{k_x^2+k_y^2+\lambda_i}$. Since this is a free theory, the multi-particle spectrum is simply obtained by adding together single particle excitations. 

We expect that the spectrum of excitations in the wormhole is equal to the spectrum of excitations in the dual microscopic theory. The dispersion relation $\omega = \sqrt{k_x^2+k_y^2+\lambda_i}$ suggests that in the (2+1)-dimensional microscopic theory\footnote{Note that the dual microscopic theory is (3+1)-dimensional (two 3D CFTs coupled by 4D auxiliary degrees of freedom), as we have pointed out. However, it flows in the IR to a gapped effective (2+1)-dimensional theory \cite{Antonini2022}. When, here and in the rest of the paper, we refer to the full dual theory as being (2+1)-dimensional, we have in mind such an effective gapped theory. This should not be confused with the two 3D CFTs living at the two asymptotic boundaries, and coupled by the auxiliary 4D degrees of freedom.} the particle excitations have momenta $\vec k=(k_x,k_y)$ and a discrete spectrum of squared masses $m_i^2 = \lambda_i$. Indeed, in the particular construction considered in \cite{VanRaamsdonk:2021qgv,Antonini2022}, the dual microscopic theory flows in the infrared towards a confining gauge theory on $\mathbb R^{1,2}$, which we expect to exhibit a discrete particle spectrum. 

Therefore we find that the wormhole scale factor, and hence the corresponding FRW scale factor, determines the mass spectrum of scalar particles in the underlying microscopic theory. It would be interesting to understand which microscopic theories correspond to FRW cosmologies with realistic scale factors. We will not pursue this question here, but it could be an interesting area for future work. Instead we will now discuss correlators of scalar operators in the microscopic theory, and how they are related to the geometry of the bulk wormhole.

\subsection{Scalar two-point functions}\label{sec:two_point_functions}

We would like to understand, given the geometry of the wormhole, what are the correlation functions of the CFT operator dual to a bulk scalar field. Then in Section \ref{sec:wormhole_from_CFT} we will study the converse problem: how to reconstruct the wormhole geometry from a knowledge of the microscopic correlators. 

Computing CFT correlators from a knowledge of the bulk geometry is straightforwardly done using the extrapolate dictionary of AdS/CFT: we compute the bulk correlators, and extrapolate them to the boundary to obtain correlators of 3D CFT operators in the dual confining theory. Let us work this out for the two-point function.

From \eqref{eq:field_expansion}, the bulk vacuum Wightman function evaluates to\footnote{In the rest of the paper we will omit the $\hat{}$ on quantum operators.}
\begin{align}
    G^{+}(x,x')\equiv \bra{0} \phi(x)\phi(x')\ket{0}
    &=
    \sum_{n}\int d k_x\int d k_y f_{k_x,k_y,n}(x)f_{k_x,k_y,n}^*(x'). 
\end{align}
Using the expression \eqref{eq:f_mode_functions} for the mode functions, this can be simplified to 
\begin{align}
    G^{+}(x,x')
    =
    \frac{1}{4\pi a(z)a(z')}\sum_{n}
    u_n(z)u_n(z')
    \int_{k_n}^\infty d\omega e^{-i\omega\Delta t} J_0\left(\Delta x_\perp\sqrt{\omega^2-k_i^2}\right),
\end{align}
where $\Delta t \equiv t-t'$, $\Delta x_\perp \equiv \sqrt{(x-x')^2+(y-y')^2}$, and we recall that $u_n(z)$ are eigenfunctions of the Schr\"odinger equation \eqref{eq:SE}, with corresponding eigenvalues $k_n^2\equiv \lam_n$. Finally, the $\omega$ integral can be evaluated to give \cite{gradshteyn2014table}
\begin{align}
    G^{+}(x,x') 
    &=
    \frac{1}{4\pi a(z)a(z')\sqrt{-\Delta t^2+\Delta x_\perp^2}}
    \sum_{n=0}^\infty
    u_n(z)u_n(z')
    \exp\left(-k_n\sqrt{-\Delta t^2+\Delta x_\perp^2}\right),
\end{align}
with the square roots taking values on their principal branch.

Consider now the CFT two-point function of the operator $\mathcal O$ which is dual to the field $\phi$. This can be obtained by taking the points $z$ and $z'$ to the boundary, and removing powers of $z_0\pm z$ and $z_0\pm z'$ so that the expression is independent of $z$ and $z'$. Concretely, $a(z)\sim 1/(z_0-|z|)$ as $z\rightarrow \pm z_0$. Meanwhile the mode functions are of the form $u_n(z)\sim u_n^+(z_0+z)^{\Delta_+}$ near $z=-z_0$, where $u_n^+$ is the normalizable coefficient, chosen such that the modes are normalized as in \eqref{eq:u_normalization}. Since $u_n(z)$ is even if $n$ is even and odd if $n$ is odd, near $z=+z_0$ the mode functions are of the form $u_n(z)\sim (-1)^n u_n^+(z_0-z)^{\Delta_+}$.

Since the wormhole has two boundaries, at $z=\pm z_0$, we have two types of insertions of the operator $\mathcal O$ into correlation functions. We denote an insertion of $\mathcal O$ at position $(t,x,y)$ at the boundary $z=-z_0$ by $\mathcal O_-(t,x,y)$, and similarly for an insertion at the boundary $z=+z_0$. Thus there are two types of two-point functions: $\langle \mathcal O_\pm\mathcal O_\pm\rangle$ and $\langle \mathcal O_\pm\mathcal O_\mp \rangle$. The extrapolation procedure gives
\begin{align}\label{eq:O+O+}
    \langle \mathcal O_\pm(t,x,y)\mathcal O_\pm(t',x',y')\rangle 
    &=
    \frac{1}{4\pi\sqrt{-\Delta t^2+\Delta x_\perp^2}}
    \sum_{n=0}^\infty
    (u_n^+)^2
    \exp\left(-k_n\sqrt{-\Delta t^2+\Delta x_\perp^2}\right),
\end{align}
for the one sided correlators, and
\begin{align}\label{eq:O+O-}
    \langle \mathcal O_\pm(t,x,y)\mathcal O_\mp(t',x',y')\rangle 
    &=
    \frac{1}{4\pi\sqrt{-\Delta t^2+\Delta x_\perp^2}}
    \sum_{n=0}^\infty
    (-1)^n(u_n^+)^2
    \exp\left(-k_n\sqrt{-\Delta t^2+\Delta x_\perp^2}\right),
\end{align}
for the two sided correlators. Given the wormhole scale factor, the mode functions $u_n(z)$ and normalizations $u_n^+$ can be computed, and the $\mathcal O\mathcal O$ two-point functions can be evaluated. Therefore, the behavior of scalar operator correlators in the dual confining theory is uniquely fixed by the wormhole geometry. 

\subsection{Vector particle spectrum and massless sector from bulk gauge field}
\label{sec:gauge_field}

Let us now consider the quantization of a free $U(1)$ gauge field in the wormhole background (\ref{eq:wormhole}). We will report here our main results, while a complete, detailed analysis can be found in Appendix \ref{app:gauge_field}. For simplicity of notation, in this section we define $\ell=2z_0$ and shift $z\to z-z_0$; the two boundaries are therefore at $z=0,\ell$. The action for a massless gauge field is given by\footnote{In this section we will use latin indices $I,J=0,1,2,3$ to indicate 4D components, and greek indices $\mu,\nu=0,1,2$ for 3D components $t,x,y$.}
\begin{equation}
    S_{gauge}=-\frac{1}{4}\int d^4x\sqrt{-g}F_{IJ}F^{IJ}
    \label{eq:gaugeaction}
\end{equation}
leading to the equations of motion (which, together with the Bianchi identity for $F_{IJ}$, provide the four Maxwell's equations)
\begin{equation}
    \partial_I\left(\sqrt{-g}F^{IJ}\right)=0.
\end{equation}
As a result of the conformal invariance of the gauge field action (\ref{eq:gaugeaction}) in 4D, the equations of motion are independent of the wormhole scale factor $a(z)$. In particular, we will work in Lorenz gauge $\eta^{IJ} \partial_IA_J=0$, where they take the simple form
\begin{equation}
   \eta^{IJ}\partial_I\partial_JA_K=0.
   \label{eq:gaugeeom}
\end{equation}
We must now specify boundary conditions for the gauge field at the two boundaries. Unlike the scalar field case, the gauge field has two normalizable components, and the choice of boundary conditions leads to non-equivalent quantizations schemes (see Appendix \ref{app:gauge_field} and \cite{Aharony:2010ay,Hijano:2020szl,Witten:2003ya,Yee:2004ju,Marolf:2006nd}). We will focus here on two (customary) possible choices of boundary conditions\footnote{See \cite{Marolf:2006nd} for a detailed study of other possible choices of boundary conditions.}: Dirichlet (also known as ``magnetic''), where we fix $F_{\mu\nu}|_{z=0,\ell}$, and Neumann (also known as ``electric''), where we fix $F_{\mu z}|_{z=0,\ell}$. The first choice, which leads to the ``standard'' quantization of the gauge field, is equivalent (in Lorenz gauge) to fixing the leading term of $A_\mu$ and the subleading term of $A_z$ at the boundary, i.e. fixing $A_\mu|_{z=0,\ell}$ and $\partial_z A_z|_{z=0,\ell}$,  up to a residual gauge transformation; the second choice, which leads to the ``alternative'' quantization of the gauge field, is equivalent to fixing the subleading term of $A_\mu$ and the leading term of $A_z$ at the boundary, i.e. fixing $\partial_z A_\mu|_{z=0,\ell}$ and $A_z|_{z=0,\ell}$, up to a residual gauge transformation. The part of the bulk gauge field which is not fixed by the boundary conditions is quantized. Note that, motivated by the reflection symmetry of the microscopic theories of our interest \cite{Antonini2022}, we are choosing identical boundary conditions at the two asymptotic boundaries; in general, it is possible to consider different boundary conditions at the two boundaries. Let us now analyze the result of the quantization of the bulk gauge field in the two schemes.

\subsubsection*{Standard quantization}

Let us start with Dirichlet boundary conditions. After fixing the residual bulk gauge freedom, we can quantize the bulk gauge field, obtaining
\begin{equation}
    \begin{aligned}
        &\hat{A}_\mu^D=\sum_{n=1}^\infty \int \frac{dk_xdk_y}{2\pi\sqrt{\omega_n}\sqrt{\ell}}\sum_{j=a,b}\left[\tilde{\varepsilon}_\mu^{(j)}(k,\rho_n)\sin(\rho_n z)\textrm{e}^{-(i\omega_n t-k_xx-k_yy)}\hat{a}^{(j)}_{k,\rho_n}+h.c.\right]\\
        &\hat{A}_z^D=\int\frac{dk_xdk_y}{2\pi\sqrt{2\omega_0}\sqrt{\ell}}\left[\textrm{e}^{-(i\omega_0 t-k_xx-k_yy)}\hat{a}_{k}+h.c.\right]
    \end{aligned}
    \label{eq:dir_quant}
\end{equation}
where $\omega_n=\sqrt{k_x^2+k_y^2+\rho_n^2}$, $\rho_n=n\pi/\ell$, $\hat{a}$ and $\hat{a}^\dagger$ are creation and annihilation operators satisfying the canonical commutation relations, and we defined the polarization vectors
\begin{equation}
    \begin{aligned}
        &\tilde{\epsilon}_\mu^{(a)}(k,\rho_n)=\frac{1}{\sqrt{k_x^2+k_y^2}}(0,-k_y,k_x)\\
        &\tilde{\epsilon}_\mu^{(b)}(k,\rho_n)=\frac{1}{\rho_n\sqrt{k_x^2+k_y^2}}(-(k_x^2+k_y^2),\omega_n k_x,\omega_n k_y).
    \end{aligned}
    \label{eq:polarizations}
\end{equation}
The quantization of the 4D gauge field in the wormhole background with Dirichlet boundary conditions gives rise to a set of Kaluza-Klein modes. There is a zero mode, which can be interpreted as a massless 3D scalar field (the $z$ component of the gauge field, uniform in the $z$ direction), and an infinite tower of 3D massive vector fields, with discrete mass spectrum $m_n=\rho_n$. The $\tilde{\varepsilon}^{(a)}_\mu$ and $\tilde{\varepsilon}^{(b)}_\mu$ polarizations correspond to the transverse and longitudinal polarizations of such 3D massive vector fields, respectively. In the boundary dual confining theory, this bulk field content corresponds to a massless scalar particle and and a tower of massive vector bosons. 

We would like to point out two features emerging from the present analysis. First, unlike the spectrum of a scalar field analyzed in Section \ref{sec:particle_spectrum}, the spectrum of a 4D gauge field (or, equivalently, the mass spectrum of the dual vector particles) gives no information about the wormhole geometry besides the range of the $z$ coordinate.\footnote{The same would be true of a massless scalar with a conformal coupling to the Ricci scalar.} Since a free 4D gauge field is conformal, this feature is in fact to be expected. This means that the wormhole geometry cannot be reconstructed by starting from the mass spectrum of such vector fields.

Second, the presence of the massless scalar field indicates that, although the dual theory is confining, a massless sector is still present in a such theory. In order to understand how such a massless sector can arise, consider two copies of a 3D CFT with a global $U(1)$ symmetry group, whose respective bulk dual theories will then possess a $U(1)$ gauge symmetry. Let us now couple the two theories using 4D auxiliary degrees of freedom as described in \cite{VanRaamsdonk:2021qgv,Antonini2022}, such that the resulting theory is confining in the IR, and the holographic dual of its ground state is given by an eternal traversable wormhole of the form (\ref{eq:wormhole}). The bulk theory will now contain a single $U(1)$ gauge field, associated with the $U(1)\times U(1)$ global symmetry of the boundary theory. This is suggestive of the fact that the boundary global symmetry is spontaneously broken by the coupling: $U(1)\times U(1)\to U(1)$. The massless scalar field $\hat{A}_z$ arising in the present analysis can be interpreted as a Goldstone boson associated to this spontaneous symmetry breaking \cite{Antonini2022}. The presence of the massless sector guarantees the existence of correlations at arbitrarily long scales (in the non-compact directions) in the ground state of the confining theory, and therefore in the wormhole. This feature can help with solving the horizon problem in the FRW cosmology related to our wormhole by double analytic continuation \cite{Antonini2022,Antonini2022short}.

\subsubsection*{Alternative quantization}

With Neumann boundary conditions, after fixing the residual bulk gauge freedom, we obtain the bulk quantum field
\begin{equation}
\begin{aligned}
        &\hat{A}_\mu^N=\int \frac{dk_xdk_y}{2\pi\sqrt{2\omega_0}\sqrt{\ell}}\left[\tilde{\varepsilon}_\mu^{(0)}(k)\textrm{e}^{-(i\omega_0 t-k_xx-k_yy)}\hat{a}^{(0)}_{k}+h.c.\right]\\
        &+\sum_{n=1}^\infty \int \frac{dk_xdk_y}{2\pi\sqrt{\omega_n}\sqrt{\ell}}\sum_{j=a,b}\left[\tilde{\varepsilon}_\mu^{(j)}(k,\rho_n)\cos(\rho_n z)\textrm{e}^{-(i\omega_n t-k_xx-k_yy)}\hat{a}^{(j)}_{k,\rho_n}+h.c.\right]
\end{aligned}
\label{eq:neu_quant}
\end{equation}
with $\tilde{\varepsilon}^{(0)}_\mu (k)=(0,-k_y,k_x)/\sqrt{k_x^2+k_y^2}$ and $\tilde{\varepsilon}_\mu^{(a,b)}(k,\rho_n)$ as in equation (\ref{eq:polarizations}).
The quantization of the 4D gauge field in the wormhole background with Neumann boundary conditions also gives rise to a set of Kaluza-Klein modes. However, in this case there is a 3D massless gauge field with only one transverse physical polarization, and an infinite tower of 3D massive vector fields with mass $m_n=\rho_n$. No 3D massless scalar field is present. The 3D massless gauge field is uniform in the $z$ direction. In the boundary dual confining theory, this bulk field content corresponds to a boundary massless gauge field, and a tower of massive vector fields. The presence of a single bulk gauge field again suggests a $U(1)\times U(1)\to U(1)$ symmetry breaking pattern. The existence of a 3D massless gauge field and the absence of a Goldstone boson associated with the symmetry breaking suggests that the $U(1)$ symmetries of the two original 3D CFTs (and the remaining $U(1)$ after symmetry breaking) are now gauged. Once again, the presence of the massless sector associated with the 3D massless gauge field guarantees the existence of correlations at arbitrarily long scales in the non-compact directions. 

Therefore, we can conclude that the presence of a $U(1)$ gauge field in the wormhole geometry implies the existence of massive vector particles (with an evenly-spaced discrete spectrum of masses $m_n=n\pi/\ell$ ($n=1,2,3,...$)) and of a massless sector in the dual confining theory. For the standard quantization of the bulk gauge field, the massless sector is given by a massless scalar field, to be regarded as a Goldstone boson for the spontaneous breaking of a global $U(1)\times U(1)\to U(1)$ symmetry due to the coupling between the two 3D CFTs; for the alternative quantization, the spontaneously broken $U(1)\times U(1)\to U(1)$ symmetry is gauged, and the massless sector is given by the associated remaining boundary gauge field.

Additional particles in the dual confining gauge theory arise from fluctuations of the metric and other fields in the geometry; their mass spectra can be obtained via a similar analysis. 

\subsection{Entanglement entropy}
\label{sec:ententropy}

Another boundary observable in which we expect to see a clear signature of the presence of the wormhole is the entanglement entropy of subregions of the dual microscopic theory. For example, consider the behavior of the entanglement entropy for a region $A=A_L\cup A_R$ --- where $A_L$ and $A_R$ are identical-sized subregions of the left and right 3D CFTs respectively --- as we vary the size of $A$. In the presence of a wormhole, we expect to observe a phase transition in the entanglement entropy related to a bulk transition from a disconnected RT surface (dominant for small $A$) to a connected RT surface going through the wormhole (dominant for large $A$). The transition occurs because the disconnected surfaces have a regulated area that eventually grows like the volume of the boundary region, while the connected surface has an area that eventually grows like the area of the boundary region. In the absence of a wormhole geometry connecting the two 3D theories living on the two boundaries, there is no such phase transition. We leave further investigation of the properties of holographic entanglement entropies in our setup to future work.

\section{Reconstructing the wormhole}
\label{sec:wormhole_from_CFT}

So far we have discussed how properties of the microscopic theory, such as its particle spectrum and correlators, are related to the geometry of the wormhole. Now let us ask the converse question: can we reconstruct the wormhole geometry from properties of the underlying confined microscopic theory? We will find that such a reconstruction is indeed possible if we have access to certain observables of the microscopic theory.

The first example of such an observable is given by the two-point functions $\langle \mathcal O_1\mathcal O_1\rangle$ and $\langle \mathcal O_2\mathcal O_2\rangle$ in the confining theory of two 3D CFT scalar operators $\mathcal O_1$ and $\mathcal O_2$ of different scaling dimensions\footnote{The operators $\mathcal{O}_1,\mathcal{O}_2$ are operators associated with the two 3D CFTs living at the two asymptotic boundaries, but their expectation value is computed in the full confining theory obtained by coupling the two 3D CFTs by auxiliary 4D degrees of freedom.}. Our wormhole reconstruction algorithm from these two-point functions can be summarized as follows.

First, in Section \ref{sec:spectrum_from_correlator}, we show how a knowledge of the scalar two-point function $\langle \mathcal O \mathcal O\rangle$ in the microscopic confining theory allows one to obtain the mass spectrum of the associated scalar particle excitations, which is equal to the spectrum of modes of a bulk scalar field with some mass $m$. Then, in Section \ref{sec:inverse_SL_problem}, we show how the mass spectrum associated with an operator $\mathcal O_i$ can be used to reconstruct the Schr\"odinger potential which gives rise to this spectrum. From \eqref{eq:V(z)} we know that this potential is given directly in terms of the wormhole scale factor and the mass of the scalar field $\phi_i$ dual to the operator $\mathcal O_i$. In Section \ref{sec:scale_factor_from_potential} we show how a knowledge of two such potentials, associated with two scalar operators $\mathcal O_1$ and $\mathcal O_2$ of different dimensions, allows one to invert this relationship and obtain the scale factor.\footnote{Note that a single potential associated with a massless bulk scalar field is sufficient to reconstruct the scale factor. For $m^2<0$ (but above the BF bound) a single potential may also be enough to reconstruct the geometry, though we only have numerical evidence for this. For $m^2>0$, a single potential is not sufficient.} We summarize the entire algorithm in Section \ref{sec:reconstruction_algorithm_summary} and provide an example of reconstruction in Section \ref{sec:reconstruction_example}. The reconstruction algorithm for the wormhole geometry from boundary scalar two-point functions should be regarded as one of the main results of this paper. 
Finally, in Section \ref{sec:metric_from_heavy} and \ref{sec:metric_from_entanglement} we explain how the wormhole scale factor can be reconstructed from correlators of 3D CFT operators with large scaling dimension, and from the entanglement entropy of strip-shaped subregions of one of the two 3D CFTs.

\subsection{Mass spectrum from two-point function}
\label{sec:spectrum_from_correlator}

In equation \eqref{eq:O+O+} and \eqref{eq:O+O-} we wrote down the CFT two-point functions $\langle \mathcal O_\pm\mathcal O_\pm\rangle$, in which both insertions of the operator $\mathcal O$ are inserted into the same same wormhole boundary, and $\langle \mathcal O_\pm\mathcal O_\mp\rangle$, where one insertion is into the CFT at $z=+z_0$ and the other insertion into the CFT at $z=-z_0$. Namely
\begin{align}
    G_{++}(s)&\equiv \langle \mathcal O_\pm(x)\mathcal O_\pm(x')\rangle 
    =
    \frac{1}{4\pi s}
    \sum_{n=0}^\infty
    (u_n^+)^2
    \exp\left(-k_n s\right),\\
     G_{+-}(s)&\equiv \langle \mathcal O_\pm(x)\mathcal O_\mp(x')\rangle 
    =
    \frac{1}{4\pi s}
    \sum_{n=0}^\infty
    (-1)^n(u_n^+)^2
    \exp\left(-k_n s\right),
\end{align}
where $s\equiv \sqrt{-(t-t')^2+(\bm x-\bm x')^2}$.

Let us consider these correlators in the regime where $s$ is imaginary, $s=i\xi$, namely
\begin{align}
    G_{++}(\xi)
    &=
    \frac{1}{4\pi i \xi}
    \sum_{n=0}^\infty
    (u_n^+)^2
    \exp\left(-i k_n \xi\right),
    \\
    G_{--}(\xi)
    &=
    \frac{1}{4\pi i\xi}
    \sum_{n=0}^\infty
    (-1)^n(u_n^+)^2
    \exp\left(-i k_n \xi\right).
\end{align}
Multiplying by $i \xi$ and taking the Fourier transform gives
\begin{align}
    G_{++}(k)
    \equiv
    \int_{-\infty}^\infty d\xi \,i\xi G_{++}(\xi) e^{i k \xi} 
    &=
    \frac{1}{2}
    \sum_{n=0}^\infty
    (u_n^+)^2
    \delta(k-k_n),
    \\
    G_{+-}(k)
    \equiv
    \int_{-\infty}^\infty d\xi \,i\xi G_{+-}(\xi) e^{i k \xi} 
    &=
    \frac{1}{2}
    \sum_{n=0}^\infty
    (-1)^n(u_n^+)^2
    \delta(k-k_n),
\end{align}
and so we see that the spectrum of scalar excitations, $k_n^2 = \lam_i$, is simply obtained from the peaks in the Fourier transforms of the two-point correlators, while the normalizations $u_n^+$ are obtained from the amplitudes of the peaks. Interestingly, it is possible to obtain this information either from the one-sided or two-sided correlators. Notice that to obtain the spectrum above a given scale $k_0$, we need to probe the two-point function on distance scales $\xi<1/k_0$. From the analysis of Section \ref{sec:particle_spectrum}, we can identify this spectrum of scalar excitations with the discrete spectrum of modes associated with a bulk scalar field.

\subsection{The inverse Sturm-Liouville problem}\label{sec:inverse_SL_problem}

Suppose now that, perhaps starting from CFT correlators as in the previous section, we obtain the spectrum $\lambda_i$ of scalar excitations associated with a given operator $\mathcal O$ in the microscopic theory, which we can identify with the discrete spectrum of modes for a bulk scalar field. We know from the analysis in Section \ref{sec:particle_spectrum} that the $\lambda_i$ are eigenvalues to the Schr\"odinger equation \eqref{eq:SE},
\begin{align}\label{eq:SE2}
    -u_i''(z)+ V(z)u_i(z)= \lambda_i u_i(z),
\end{align}
where the potential $V(z)$ is related to the scale factor of the wormhole. Given the potential, the problem of finding the eigenvalues $\lambda_i$ (i.e. diagonalizing the Hamiltonian) is called a Sturm-Liouville problem. Here, we are interested in the inverse Sturm-Liouville problem problem: finding the potential $V(z)$ from a given spectrum $\lam_i$.\footnote{The question ``can one hear the shape of a drum?" \cite{kac1966} --- i.e. ``Can one determine the geometry of a two-dimensional manifold from a knowledge of the spectrum of the Laplace operator?" --- is a higher dimensional analog to the inverse Sturm-Liouville problem.}

Under the assumption of a generic potential the solution to the inverse Sturm-Liouville problem is not unique \cite{Levitan-Gasymov,Hald}. However the potentials that are of interest to us are not completely generic; we are assuming that the wormhole geometry is even $a(z)=a(-z)$, and so the potential is also even, $V(z)=V(-z)$. In the context of wormholes that arise from cosmological models \cite{Antonini2022}, this symmetry of the geometry arises from the $\mathbb Z_2$ symmetry present in both the Euclidean CFT path integral used to define the theory, as well as the choice of slicing of this path integral which defines the state of the theory. It has been shown in \cite{Hald} that if the potential is even and integrable, $\int_{-z_0}^{z_0}V(z)<\infty$, then the inverse Sturm-Liouville problem can be solved. Namely, there exists an efficient algorithm which allows one to reconstruct the potential $V(z)$ given its spectrum.

Unfortunately we cannot directly apply these results to solve our inverse Sturm-Liouville problem, because our potential diverges as $1/z^2$ near the AdS boundaries (see Eq. \eqref{eq:V_asymptotics}) and hence is not integrable. In this section we will upgrade the reconstruction algorithm in \cite{Hald} to allow for potentials with AdS asymptotics. We will make the following assumptions:

\subsubsection*{Assumptions}\label{sec:assumptions}
\begin{enumerate}
    \item The domain of $z$ is $(-z_0,z_0)$ for some $z_0>0$;
    \item The potential is symmetric, $V(z)=V(-z)$;
    \item The potential diverges as $\frac{\alpha}{(z_0- |z|)^2}+\frac{\beta}{(z_0-|z|)}$ at $z=\pm z_0$, with $\alpha>-1/4$ and $\beta\in \mathbb R$;
    \item The normalizable spectrum $\{\lambda_j\}$ to the Schr\"odinger equation \eqref{eq:SE2} is known.
\end{enumerate}
Let us make some comments about these assumptions. From equation \eqref{eq:V_asymptotics} we see that the parameter $\alpha$ characterizing the leading divergence of the potential is given by $\alpha = 2+m^2 L^2$, where $L$ is the AdS length in the asymptotically AdS regions, and $m$ is the mass of the scalar field. The assumption $\alpha>-1/4$ is therefore equivalent to the Breitenlohner-Freedman (BF) bound $m^2L^2>-d^2/4$, with $d=3$ the spatial dimension of the wormhole \cite{breitenlohner1982stability,breitenlohner1982positive}. We are also allowing for a subleading divergent term in $V(z)$ proportional to some other constant $\beta$.\footnote{In principle, we could also have some subleading divergent term with a different non-integer power, but we will restrict our analysis to this case.} We see from \eqref{eq:V(z)} that such a term in the potential is expected to arise from general scale factors with boundary asymptotics of the AdS form. 

As we have already seen, for a potential with leading divergence of the form in assumption 3, the series expansion near $z=-z_0$ of a solution to the Schr\"odinger equation \eqref{eq:SE2} is
\begin{align}\label{eq:u_asymptotics}
    &u(z)=
    \um(-z_0)(z+z_0)^{\Delta_-}
    \left(1+\dots\right)
    +
    \up(-z_0)(z+z_0)^{\Delta_+}
    \left(1+\dots\right),
\end{align} 
where
\begin{align}
    &\Delta_{\pm} \equiv \frac{1}{2}\pm\Dbar,\label{newdeltas}\\
    &\Dbar \equiv \frac{1}{2}\sqrt{1+4\alpha},
\end{align}
and $u^{\pm}(-z_0)$ are numerical coefficients. The omitted terms in the above expansion are of subleading order in the distance $z+z_0$ from the boundary. We say that the term proportional to $\up(-z_0)$ is normalizable at $z=-z_0$, and the term proportional to $\um(-z_0)$ is non-normalizable at $z=-z_0$. We have a similar expansion at $z=+z_0$, with normalizable and non-normalizable coefficients $\up(z_0)$ and $\um(z_0)$. We define the \textit{normalizable eigenfunctions} of the Schr\"odinger equation \eqref{eq:SE2} to be those eigenfunctions for which the non-normalizable component vanishes at both endpoints, $\um(-z_0)=\um(z_0)=0$. The \textit{normalizable spectrum} is the set of eigenvalues associated with the normalizable eigenfunctions.

Note that the normalizable eigenfunctions are indeed normalizable in $L^2$ norm. Perhaps somewhat confusingly, if $\alpha<3/4$, eigenfunctions which are non-normalizable at one or both endpoints are also normalizable in $L^2$ norm. This is related to the ambiguity --- which arises when a scalar field in AdS has a mass in the range $-d^2/4<m^2L^2<-d^2/4+1$ --- in deciding which part of the field to identify with the expectation value of the dual CFT operator, and which part to identify with the source of the dual CFT operator. For simplicity we will always associate the normalizable eigenfunctions with the expectation values of the dual operators, which in the bulk is the statement that, as we have done in Section \ref{sec:particle_spectrum}, we will expand the scalar field in terms of a complete set of normalizable eigenfunctions of the wave equation. 

We will now show how to reconstruct the potential $V(z)$ from its spectrum, subject to these assumptions. The main idea is to reconstruct the unknown potential $V(z)$ from some known ``test'' potential $\Vt(z)$ which has the same asymptotic spectrum as $V(z)$. Our first task is to determine the test potential $\Vt(z)$.

\subsubsection*{Finding the test potential}

We want to find a test potential $\Vt(z)$ such that the eigenvalues $\lamt_j$, $j=0,1,2,\dots$, of the eigenvalue problem
\begin{align}\label{eq:Vt_evalue_problem}
    -u''(z)+\Vt(z)u(z)&=\lamt u(z),\\
    \um(\pm z_0)&=0\label{boundarycond},
\end{align}
are asymptotically equal to $\lam_j$ for $j\rightarrow\infty$. To find such a $\Vt(z)$, let us consider solving the equation $-u''(z)+V(z)u(z) = \lam u(z)$ on the interval $z\le 0$, in the limit $\lam\rightarrow\infty$. If $V(z)$ contains terms which are bounded on $z\le 0$, then the term proportional to $\lam$ dominates over these terms and we can simply replace these bounded terms by their average value over $z\le 0$. Besides this constant, the only terms which remain are the divergent pieces of $V(z)$ on $z\le 0$, which we know by assumption to be $\frac{\alpha}{(z_0+z)^2}+\frac{\beta}{z_0+z}$, although we do not know the values of $\alpha$ and $\beta$ a priori. By making the same argument for $z> 0$, we conclude that in the limit $\lam\rightarrow \infty$, the potential $V(z)$ can effectively be replaced by 
\begin{align}
    \Vt(z) \equiv
    \frac{\alpha}{(z_0-|z|)^2}+\frac{\beta}{z_0-|z|}+c,
    \label{eq:Vt}
\end{align}
where $\alpha,\beta,c$ are unknown constants. Our goal is to determine these constants. We will do this by equating the asymptotic spectrum $\lamt_j$ associated with this $\Vt(z)$ with the asymptotic part of the given spectrum $\lam_j$.

The potential $\Vt(z)$ is simple enough that we can explicitly compute its asymptotic spectrum. As we show in Appendix \ref{app:asymp_spectrum}, up to terms that go to zero as $j\rightarrow\infty$, the spectrum at large $j$ is
\begin{align}
    \tilde{\lam}_j\sim 
    Z j^2+
    A j+
    B \log(j)+
    C,
    \label{eq:fitformula}
\end{align}
where $Z,A,B,C$ are given by
\begin{align}
    Z &= \left(\frac{\pi}{2z_0}\right)^2,\label{eq:Z}\\
    A &= \left(\frac{\pi}{2z_0}\right)^2 2\Delta_+,\\
    B &= \frac{\beta}{z_0},\\
    C &= \left(\frac{\pi}{2z_0}\right)^2\left(\Delta_+^2-\frac{4\alpha}{\pi^2}\right)
    +\frac{\beta}{z_0}\left[\log\left(\pi\right)-\psi(\Delta_+)\right]+c,\label{eq:C}
\end{align}
with $\psi(x)$ being the digamma function and $\Delta_+$ was defined in \eqref{newdeltas}. Notice that $Z>0$. As we saw, the Breitenlohner-Freedman bound requires $\alpha>-1/4$, and so $A>0$ as well. Meanwhile, $B$ and $C$ can take on any real values. These restrictions on $Z,A,B,C$, together with equation \eqref{eq:fitformula}, constitute the necessary asymptotic conditions on the sequence $\lam_j$ in order for it to correspond to the spectrum of a scalar field in a symmetric wormhole geometry which connects two asymptotically AdS regions. We will assume that the spectrum from which we are trying to reconstruct the wormhole geometry is of this form. In this case we know that there exists a solution to the inverse Sturm-Liouville problem; we will try to find this solution.

Recall that our current goal is to obtain the test potential $\Vt(z)$ defined in \eqref{eq:Vt}. Given the spectrum $\lam_j$ we can deduce the values $Z,A,B,C$, and then we can invert equations \eqref{eq:Z}-\eqref{eq:C} to give $z_0,\alpha,\beta$ and $c$ in terms of $Z,A,B,C$:
\begin{align}
    z_0 &= \frac{\pi}{2\sqrt Z}, \label{eq:z_0_vs_Z}\\
    \alpha &= \frac{A}{4Z}\left(\frac{A}{Z}-2\right),\label{eq:a_vs_A}\\
    \beta &= \frac{\pi B}{2\sqrt Z},\label{eq:b_vs_B}\\
    c &= -\frac{A}{4\pi^2}\left[\frac{(\pi^2-4)A}{Z}+8\right]
    -B\left[\log\left(\pi\right)-\psi\left(\frac{A}{2Z}\right)\right]+
    C.\label{eq:c_vs_C}
\end{align}
Therefore, given a spectrum $\lam_j$ with the correct asymptotic form, we can obtain the test potential $\Vt(z)$. The spectrum $\lamt_j$ associated with the test potential will be asymptotically equal to $\lam_j$, as desired.

Notice that, solely from this asymptotic analysis, we can already make several conclusions about some large scale features of the wormhole geometry, and the associated Schr\"odinger potential $V(z)$. For instance, from \eqref{eq:Z} we see that the leading (quadratic in $j$) term in the asymptotic expansion of the spectrum $\lam_j$ determines $z_0$, the coordinate width of the wormhole geometry. Meanwhile the term linear in $j$ determines the leading ($\alpha/z^2$) divergence of the potential near the AdS boundaries, and hence the cosmological constant. The subleading $\log(j)$ term gives the $\beta/z$ divergence of the potential, and the term constant in $j$ gives the average value $c$ of the non-singular terms in the potential. 

It is in accordance with our intuition from AdS/CFT that the values $\alpha$ and $\beta$, which are related to the AdS asymptotics of the wormhole geometry, are associated with the UV asymptotics of the spectrum $\lam_j$. Interestingly we also see that what might be considered the two most extremely IR features of the wormhole --- the size of the wormhole $z_0$ and the average value $c$ --- are also determined by the asymptotic form of the spectrum. Heuristically we can say that the asymptotic spectrum encodes the large scale features of the wormhole.

In order to determine the finer details of the wormhole, and in particular the potential $V(z)$, we will need to make use of our knowledge of the spectrum $\lam_k$ at finite $k$. In fact we will find that the eigenvalue $\lambda_k$ encodes the features of the potential on coordinate distance scales of the order $z_0/k$, and so if we are only interested in reconstructing the features of the potential (and thus of the scale factor) on scales larger than $z_0/N$, then to a good approximation it is enough to only know the finite subset $\{\lam_1,\lam_2,\dots,\lam_N\}$ of the full spectrum.\footnote{Of course we also need the asymptotic form of $\lam_j$ in order to obtain the large scale structure of the potential.} Let us now work out the details of this reconstruction.

\subsubsection*{Reconstructing the true potential from the test potential}

Having obtained the test potential $\Vt(z)$ from the asymptotic spectrum, let us now discuss how to reconstruct the true potential $V(z)$. The main idea of the reconstruction algorithm is based on the study of inverse Sturm-Liouville problems by Ref. \cite{Hald}, but adapted to allow for singular potentials such as ours. 

We begin by defining $u(z,\lam), \ut(z,\lam),v(z,\lam)$ and $\vt(z,\lam)$ as the solutions to the following initial value problems on $z\in(-z_0,z_0)$:  
\begin{align} \label{eq:u_and_v}
u:
    \begin{cases}
    u''(z,\lam)=\big[V(z)-\lam\big]u(z,\lam)\\
    \up(-z_0,\lam)=\frac{1}{2\Dbar}\\
    \um(-z_0,\lam)=0
    \end{cases}\quad
v:
    \begin{cases}
    v''(z,\lam)=\big[V(z)-\lam\big]v(z,\lam)\\
    v^{(+)}(+z_0,\lam)=\frac{1}{2\Dbar}\\
    v^{(-)}(+z_0,\lam)=0
    \end{cases}\\
\ut:\label{eq:ut_and_vt}
    \begin{cases}
    \ut''(z,\lam)=\big[\Vt(z)-\lam\big]\ut(z,\lam)\\
    \utp(-z_0,\lam)=\frac{1}{2\Dbar}\\
    \utm(-z_0,\lam)=0
    \end{cases}\quad
\vt:
    \begin{cases}
    \vt''(z,\lam)=\big[\Vt(z)-\lam\big]\vt(z,\lam)\\
    \vt^{(+)}(+z_0,\lam)=\frac{1}{2\Dbar}\\
    \vt^{(-)}(+z_0,\lam)=0
    \end{cases}
\end{align}
Here primes denote $z$ derivatives and recall that $\Dbar = \sqrt{\alpha+1/4}$. As before we are using the notation $u^{(\pm)}(z_0,\lam)$ for the normalizable ($+$) and non-normalizable ($-$) parts of $u(z,\lam)$ at $z=z_0$, and similarly at $z=-z_0$. The reason for the choice of normalization $1/(2\Dbar)$ is explained in Appendix \ref{app:lemma}. Notice that while $u$ and $\ut$ are by definition always normalizable (and hence non-diverging) at the left boundary $z=-z_0$, in general they are non-normalizable (and thus in general diverging) at the right boundary $z=+z_0$. Similarly the $v$ and $\vt$ are normalizable at the right boundary, but non-normalizable at the left one. However the discrete set of functions $u_j(z)\equiv u(z,\lam_j)$ and $v_j(z)\equiv v(z,\lam_j)$ are eigenfunctions of the Schr\"odinger equation with potential $V$ and normalizable boundary conditions at the left \textit{and} right boundaries, and so these functions are completely regular on the entire wormhole. Analogously, $\tilde{u}_j(z)\equiv \tilde{u}(z,\tilde{\lambda}_j)$, $\tilde{v}_j(z)\equiv \tilde{v}(z,\tilde{\lambda}_j)$ are also regular in $[-z_0,z_0]$.

With this in mind, we define the \textit{Wronskian} $\omega(\lam)$ as the non-normalizable part of $u(z,\lam)$ at the right boundary, namely
\begin{align}\label{eq:omega}
    \omega(\lambda)\equiv-\lim_{z\rightarrow z_0}\frac{u(z,\lam)}{(z_0-z)^{\Delta_-}},
\end{align}
so that $\omega(\lam)=0$ if and only if $\lam$ is a normalizable eigenvalue of the Schr\"odinger equation \eqref{eq:SE2} with potential $V(z)$. The following lemma, which we prove in Appendix \ref{app:lemma}, will play a crucial role in our reconstruction of $V(z)$ from $\Vt(z)$:

\begin{lemma} 
For any $L^2$ integrable function $f(z)$ on $z\in(-z_0,z_0)$ the following identity holds:
\begin{align}\label{eq:modified_SL_expansion}
    f(z)
    =
    \sum_{j=0}^\infty
    \frac{\vt_j(z)\int_{-z_0}^z dy\,u_j(y) f(y)+\ut_j(z)\int_{z}^{z_0} dy\,v_j(y) f(y)}{\omega'(\lam_j)}.
\end{align}
\end{lemma}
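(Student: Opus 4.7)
The plan is to recognize the right-hand side of \eqref{eq:modified_SL_expansion} as the sum of residues of a meromorphic kernel in $\lambda$, and then, by contour integration combined with Liouville--Green (WKB) asymptotics, identify that residue sum with $\delta(z-y)$ in the distributional sense. Pairing against $f$ then gives $f(z)$, proving the lemma.

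First I would introduce the kernel
\begin{equation*}
P(z,y;\lambda)\,\equiv\,\frac{1}{\omega(\lambda)}\bigl[\vt(z,\lambda)\,u(y,\lambda)\,\Theta(z-y)\,+\,\ut(z,\lambda)\,v(y,\lambda)\,\Theta(y-z)\bigr],
\end{equation*}
where $u,v,\ut,\vt$ are the $\lambda$-continuations of the initial-value problems \eqref{eq:u_and_v}--\eqref{eq:ut_and_vt}. For fixed $z,y\in(-z_0,z_0)$, $P$ is meromorphic in $\lambda$ with simple poles exactly at the normalizable eigenvalues $\lambda_j$ (the zeros of $\omega$). A direct calculation shows that the residue at $\lambda=\lambda_j$ reproduces the $j$-th summand of the right-hand side of \eqref{eq:modified_SL_expansion}, provided one reads $\ut_j(z),\vt_j(z)$ in the lemma as denoting $\ut(z,\lambda_j),\vt(z,\lambda_j)$, i.e.\ the test-potential solutions evaluated at the \emph{true} eigenvalues. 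Integration against $f$ thereby reduces the lemma to the distributional kernel identity
\begin{equation*}
\sum_{j=0}^{\infty}\mathrm{Res}_{\lambda=\lambda_j}P(z,y;\lambda)\,=\,\delta(z-y).
\end{equation*}

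To establish this identity I would express the residue sum as the limit, as $N\to\infty$, of $\tfrac{1}{2\pi i}\oint_{C_N}P(z,y;\lambda)\,d\lambda$ over a sequence of rectangular contours $C_N$ enclosing $\lambda_0,\dots,\lambda_{N-1}$ and crossing the real axis between $\lambda_{N-1}$ and $\lambda_N$. The central analytic input is the Liouville--Green/WKB expansion of $u,v,\ut,\vt$ at large $|\lambda|$ off the real axis, which yields $u(z,\lambda)=\ut(z,\lambda)\bigl(1+O(|\lambda|^{-1/2})\bigr)$, $v(z,\lambda)=\vt(z,\lambda)\bigl(1+O(|\lambda|^{-1/2})\bigr)$, and $\omega(\lambda)=\tilde\omega(\lambda)\bigl(1+O(|\lambda|^{-1/2})\bigr)$. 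Consequently $P(z,y;\lambda)$ approaches $-\tilde G(z,y;\lambda)$ along $C_N$, with $\tilde G$ the Green's function of the test operator $-\partial_z^2+\Vt-\lambda$ under normalizable boundary conditions, and the WKB remainder is absolutely integrable on $C_N$ with vanishing contribution as $N\to\infty$. Spectral completeness for $-\partial_z^2+\Vt$ then supplies $\tfrac{1}{2\pi i}\oint_{C_N}(-\tilde G)\,d\lambda\to\delta(z-y)$ distributionally, and combining this with the WKB remainder estimate gives the required identity for $P$.

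The main obstacle is executing the WKB analysis in the presence of the AdS-type endpoint singularities of $V$ and $\Vt$, where both diverge like $\alpha/(z_0\mp z)^2+\beta/(z_0\mp z)$ and the naive oscillatory ansatz breaks down. The choice of test potential in Section \ref{sec:inverse_SL_problem}---matching the exact leading and subleading singular structure of $V$---is tailored precisely for this: since $V-\Vt$ is bounded on $(-z_0,z_0)$, the Volterra iteration expressing $u$ in terms of $\ut$ (and similarly for $v$ and $\omega$) has a kernel of uniform order $O(|\lambda|^{-1/2})$ all the way up to the boundary. The remaining technical care in the appendix is to (i) match this uniform WKB estimate to the Frobenius expansion \eqref{eq:u_asymptotics} near $z=\pm z_0$, and (ii) verify that the distributional limit $\oint\tilde G\,d\lambda\to-2\pi i\,\delta(z-y)$ commutes with the removal of the WKB remainder along $C_N$.
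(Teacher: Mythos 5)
Your proposal is correct and takes essentially the same route as the paper's own proof: pairing your kernel $P(z,y;\lambda)$ against $f$ produces exactly the function $\Phi(z,\lambda)$ that the paper integrates over a large contour, with the residues at the zeros of $\omega(\lambda)$ yielding the claimed sum, the large-$|\lambda|$ replacement $V\to\Vt$, $u\to\ut$, $v\to\vt$, $\omega\to\tilde\omega$ reducing the integral to the complete Sturm--Liouville expansion of the test problem, and your reading of $\ut_j,\vt_j$ as test-potential solutions evaluated at the true eigenvalues matching the appendix's definition. Your rectangular contours and uniform WKB/Volterra estimates (exploiting the boundedness of $V-\Vt$ up to the endpoints) merely make rigorous the step the paper states heuristically as ``in the limit $|\lambda|\rightarrow\infty$ we can effectively replace $V(z)$ with $\Vt(z)$.''
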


Let us set $f(z)$ equal to $u_0(z)$, the eigenfunction of the Schr\"odinger equation corresponding to the lowest eigenvalue $\lam_0$. A basic fact from Sturm-Liouville theory states that $u_j(z)$ has $j$ roots, and so $u_0(z)>0$. Using the orthonormality of the $u_j(z)$ in the $L^2$-norm, the lemma gives
\begin{equation}\label{eq:u0_expansion}
    u_0(z)=\ut_0(z)+\frac{1}{2}\sum_j \yt_j\int_{-z_0}^{z_0}dx\,u_j(x)u_0(x),
\end{equation}
where
\begin{equation}
    \yt_j(z)
    \equiv 
    2\frac{\vt_j(z)-(-1)^j \ut_j(z)}{\omega'(\lam_j)},
    \label{eq:yt}
\end{equation}
Differentiating \eqref{eq:u0_expansion} twice and using the differential equations satisfied by $u_0(z)$ and $\ut_0(z)$, we obtain
\begin{equation}
    V(z)u_0(z)=\Vt(z)u_0(z)+\sum_j\big[\yt_j(z)u_j(z)\big]'u_0(z).
\end{equation}
Finally, since $u_0(z)>0$ this implies
\begin{align}\label{eq:recon}
    V(z)=\Vt(z)+\sum_j\big[\yt_j(z)u_j(z)\big]'.
\end{align}

Equation \eqref{eq:recon} relates the unknown, true potential $V(z)$ to the test potential $\Vt(z)$, the latter having already been determined in the previous section. However this formula is not sufficient for determining $V(z)$, because the eigenfunctions $u_j(z)$ appearing inside this formula implicitly depend on $V(z)$ through the Schr\"odinger equations
\begin{align}\label{eq:SE3}
    u_j''(z)=\big[V(z)-\lambda_j\big]u_j(z).
\end{align}
Instead, we should think of \eqref{eq:recon} together with the Schr\"odinger equations \eqref{eq:SE3} as a coupled system of non-linear ODEs for the unknown functions $V,u_0,u_1,u_2,\dots$. 

If we restrict $j$ to the interval $0,1,\dots, \jmax$, then
\begin{align}\label{eq:system}
\begin{cases}
    V(z)=\Vt(z)+\sum_{j=0}^{\jmax}\big[\yt_j(z)u_j(z)\big]',\\
    u_j''(z)=\big[V(z)-\lambda_j\big]u_j(z),\quad j = 0,1,\dots,\jmax
\end{cases}
\end{align}
forms a closed system of $\jmax+2$ equations for $\jmax+2$ unknown functions $V,u_0,u_1,\dots,u_{\jmax}$. For a given value of $\jmax$, denote $V_{\jmax}(z)$ as the solution for $V(z)$ obtained by solving this finite system of equations. Since $\yt_j\rightarrow 0$ as $j\rightarrow\infty$\footnote{To see this note that as $j\rightarrow\infty$ we have that $\lam_j\rightarrow\lamt_j$ and hence in this limit $\ut_j$ and $\vt_j$ both approach the $j$th eigenfunction of the Schr\"odinger equation with potential $\Vt$. Since the $j$th eigenfunction of a symmetric Sturm-Liouville problem has parity $(-1)^j$ we see that $\vt_j=(-1)^j \ut_j$ and hence $\yt_j\rightarrow 0$.}, we expect that the corrections to $V_{\jmax}$ become smaller and smaller as we increase $\jmax$ to larger and larger values, and that in the limit $\jmax\rightarrow\infty$ we obtain the true potential $V(z)$. 

In practice, we will restrict to a finite $\jmax$ and thus obtain an approximation to the true potential $V(z)$. Since the functions $u_j$ and $\yt_j$ oscillate on scales of order $z_0/j$, we can think of this as being a good approximation to the true potential on coordinate distance scales larger than $z_0/j_{max}\sim 1/(\sqrt{Z}\jmax)$.\footnote{The \textit{scale independent} features of the potential $V(z)$, namely the amplitudes of the $1/z^2$ and $1/z$ AdS divergences at the AdS boundaries and the average value of the non-diverging piece of $V(z)$, were determined from the asymptotic (large $j$) part of the spectrum.} We will explicitly see this feature in the example reconstruction presented in Section \ref{sec:reconstruction_example}.

Let us briefly comment on a subtlety which arises when we attempt to solve the truncated system of ODEs \eqref{eq:system}. The issue is that in order to obtain $\yt_j$ via equation \eqref{eq:yt}, we need to compute the derivatives of the Wronskian, $\omega'(\lam_j)$. However the Wronskian is defined in \eqref{eq:omega} in terms of the mode function $u(z,\lam)$, which depends on the potential $V(z)$, and so it appears that $\omega'(\lam_j)$ must be solved for simultaneously with the system of equations for $V(z)$ and $u_j$. Although this is possible, there is an elegant approach due to \cite{Hald} which allows us to independently and efficiently compute $\omega(\lam_j)$ prior to computing $V(z)$ and $u_j$.

The main idea is to observe that, from its definition, the function $\omega(\lam)$ is equal to zero if and only if $\lam = \lam_j$. It can then be shown \cite{Hald} that $\omega(\lam)$ is an entire function of order $1/2$ and thus by the Hadamard factorization theorem can be written as $\omega(\lam) = A\prod_i(1-\lam/\lam_i)$, where $A$ is some normalization constant which can be determined by considering the large $\lam$ limit, in which $\lam_j\rightarrow \lamt_j$ and $u_j \rightarrow \ut_j$. Differentiating and setting $\lam = \lam_j$ gives 
\begin{align}\label{eq:omega_prime_approx}
    \omega'(\lam_j)
    =
    -\frac{\ut_j^{(-)}(z_0)}{\lam_j-\lamt_j}\prod_{i\neq j}^{\infty}\frac{(\lam_j-\lam_i)}{(\lam_j-\lamt_i)}
    \approx
    -\frac{\ut_j^{(-)}(z_0)}{\lam_j-\lamt_j}\prod_{i\neq j}^{\jmax}\frac{(\lam_j-\lam_i)}{(\lam_j-\lamt_i)},
\end{align}
where we have approximated $\lam_j\approx \lamt_j$ for $j>\jmax$. This formula allows us to approximate $\omega'(\lam_j)$ from quantities that we are given ($\lam_j$) and quantities which we can easily compute ($\lamt_j$ and $\ut_j$).

\subsection{Solving for the scale factor from $V(z)$}\label{sec:scale_factor_from_potential}

Having reconstructed the potential $V$, we now want to obtain the wormhole scale factor by solving the non linear ODE (\ref{eq:V(z)}) for $a(z)$. It is convenient to define the scale factor in AdS units $\tilde{a}(z)=a(z)/L$, in terms of which the ODE takes the form
\begin{equation}
  \tilde{a}''(z)+(\alpha-2)\tilde{a}^3(z)-V(z)\tilde{a}(z)=0
    \label{ODE}
\end{equation}
where we used $m^2L^2=\alpha-2$, and $\alpha$ is the parameter entering the definition of the test potential. In order to find a unique solution for $\tilde{a}(z)$, we need to impose initial conditions for $\tilde{a}(z)$ and its derivative at some point $z\in (-z_0,z_0)$. However, we only know that the scale factor is symmetric (and therefore $\tilde{a}'(0)=0$), and that its asymptotic behavior is given by equation (\ref{eq:a_asymp_ads}), which in AdS units becomes
\begin{equation}
    \lim_{z\to\mp z_0}\tilde{a}(z)=\frac{1}{z_0\pm z}.
    \label{limit}
\end{equation}
In general settings and without further assumptions, this information is not enough to uniquely determine the scale factor from a single reconstructed potential. Let us understand this in some more detail.

If the bulk scalar field is massless, which implies $\alpha = 2$, there is a unique symmetric solution to the ODE (\ref{ODE}) with the correct asymptotics (\ref{limit}). In fact, in this case the ODE reduces to a linear ODE. Therefore, given a symmetric solution of the ODE with some initial condition $\tilde{a}(0)$, all other solutions differ by a multiplicative constant. Only one solution in such family has the correct coefficient (i.e. $1$) for the divergent term. We conclude that a single potential associated to a bulk massless scalar field is enough to reconstruct the wormhole geometry. On the other hand, for $\alpha\neq 2$ (corresponding to $m^2\neq 0$) it is non-trivial to understand whether there is a unique symmetric solution to (\ref{ODE}) with the correct asymptotics.

In order to shed light on this problem, it is useful to consider the generic ansatz\footnote{Note that this is the most generic ansatz giving rise to a potential of the form we are considering (specified in Assumption 3 in Section \ref{sec:inverse_SL_problem}). In particular, it implements the symmetry of the scale factor and the AdS asymptotic behavior at the two boundaries, while introducing no further divergences in the potential besides the quadratic and linear ones appearing in our assumptions.}
\begin{equation}
    \tilde{a}(z)=\frac{1}{z_0+z}+\frac{1}{z_0-z}+R(z)
    \label{ansatz}
\end{equation}
where $R(z)$ is a symmetric function, regular in $z\in [-z_0,z_0]$. Suppose we have a solution $\tilde{a}_1(z)$ to the ODE (\ref{ODE}) of the form (\ref{ansatz}). Let us consider a second solution of the form $\tilde{a}_2(z)=\tilde{a}_1(z)+f(z)$ where $f(z)$ is a symmetric function, regular in $z\in [-z_0,z_0]$. $\tilde{a}_2(z)$ is by construction symmetric and of the form (\ref{ansatz}), i.e. it has the correct asymptotic behavior. Substituting $\tilde{a}_2(z)$ in the ODE (\ref{ODE}) and using the fact that $\tilde{a}_1(z)$ is also a solution, we find that $f(z)$ must satisfy
\begin{equation}
    \begin{cases}
    \frac{f''(z)}{f(z)}-\frac{\tilde{a}_1''(z)}{\tilde{a}_1(z)}+(\alpha-2)[f(z)+2\tilde{a}_1(z)][f(z)+\tilde{a}_1(z)]=0\\
    f'(0)=0\\
    f(\pm z_0)<\infty
    \end{cases}
    \label{addsol}
\end{equation}
For $\alpha=2$ (massless bulk scalar field), the problem (\ref{addsol}) has no non-trivial solutions, because all the candidate solutions of (\ref{addsol}) are proportional to $\tilde{a}_1(z)$, which is divergent at $z=\pm z_0$, and therefore violate $f(\pm z_0)<\infty$. This result is supported by our numerical analysis, and confirms our expectation that there is a unique symmetric solution of the ODE (\ref{ODE}) with asymptotic behavior (\ref{limit}) when $\alpha=2$: the spectrum of a massless scalar field is sufficient to reconstruct the wormhole geometry.

For $\alpha >2$ (massive, non-tachyonic bulk scalar field, $m^2>0$), our numerical analysis suggests that, in general, there are infinitely many solutions to the system \eqref{addsol}. For example, consider the scale factor\footnote{Note that this form of the scale factor has no particular physical meaning. We included the last term to emphasize that the presence of non-perturbative contributions (at $z=\pm z_0$) does not prevent the existence of multiple solutions. The same result was obtained for several different forms of the scale factor.}
\begin{equation}
    \tilde{a}(z)=\frac{1}{z_0+z}+\frac{1}{z_0-z}+\cos\left(\frac{2\pi z}{z_0}\right)+\exp\left[-\frac{1}{(z_0^2-z^2)}\right],
    \label{testscale}
\end{equation}
from which we can easily compute the corresponding potential $V(z)$. However, the system \eqref{addsol} then has multiple solutions, only one of which corresponds to the scale factor $\tilde a(z)$ which we started with (see Figure \ref{fig:multsol}). Without additional information it is not possible to pick out the correct scale factor over the other valid solutions, $\tilde a_1, \tilde a_2$, etc. 
\begin{figure}
    \centering
    \includegraphics[width=0.7\textwidth]{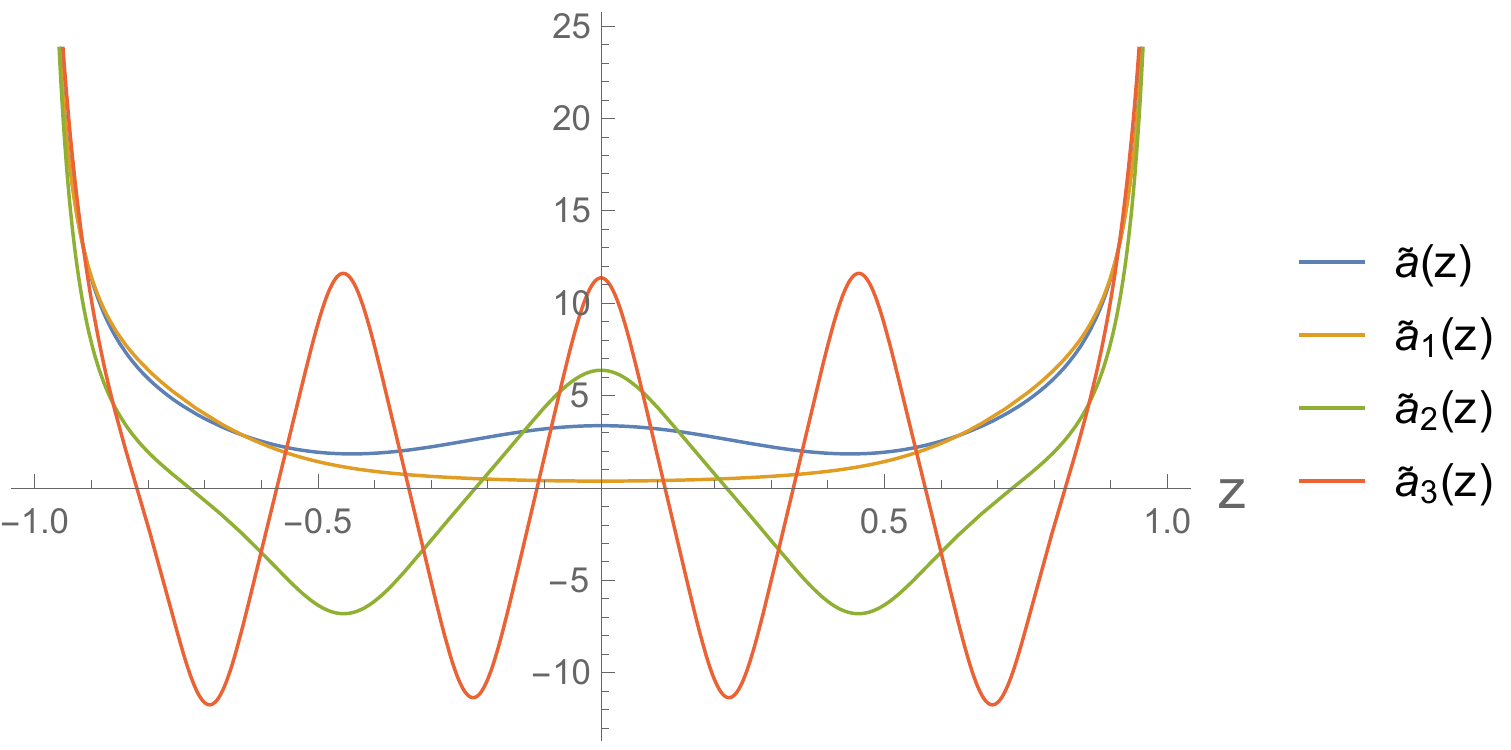}
    \caption{Multiple solutions of the ODE (\ref{ODE}) with potential $V(z)$ defined by the scale factor (\ref{testscale}), with $z_0=1$ and $mL=3/2$. The solutions displayed are the original scale factor (\ref{testscale}) $\tilde{a}(z)$, and additional solutions of the form $\tilde{a}_i(z)=\tilde{a}(z)+f_i(z)$, where $f_i(z)$ is a solution of (\ref{addsol}). In particular, they are solutions of (\ref{addsol}) with initial conditions $f_1(0)=-3$, $f_2(0)=3$, $f_3(0)=8$. The solutions $\tilde{a}_i(z)$ are symmetric and display the asymptotic behavior (\ref{limit}). In general, there are infinite solutions of this form.}
    \label{fig:multsol}
\end{figure}
Therefore, the solution of the ODE (\ref{ODE}) is not uniquely determined by symmetry and by the asymptotic behavior (\ref{limit}). We can conclude that, for $m^2>0$ and without further assumptions on the properties of the scale factor, we cannot reconstruct the wormhole geometry from a single potential $V(z)$, i.e. from a single scalar field spectrum\footnote{If we assume that the regular part of the scale factor $\tilde{a}(z)$ has a convergent asymptotic expansion around $z=\pm z_0$, and that such an expansion converges to $\tilde{a}(z)$ itself, then we can reconstruct the scale factor uniquely by matching its asymptotic expansion with an asymptotic expansion for the reconstructed potential. Numerically, only the first few terms are needed in order to obtain initial conditions at a point $\tilde{z}=-z_0+a$ (with small but finite $a$), and then the ODE (\ref{ODE}) can be solved with such initial conditions. Note that the assumption of a convergent asymptotic series alone is not enough: there might be non-perturbative contributions to the scale factor $\tilde{a}(z)$ (e.g. of the form included in equation (\ref{testscale})) which cause the asymptotic expansion to converge to a function $g(z)\neq \tilde{a}(z)$. When this is the case, the asymptotic expansion leads to a wrong reconstructed scale factor.}.  

Finally, for $\alpha<2$ (tachyonic bulk scalar field, $m^2<0$), our numerical analysis suggests that there is no solution of (\ref{addsol}). Although we have no mathematical proof for the non-existence of a solution of the problem (\ref{addsol}), the numerical evidence and the fact that multiple solutions of the ODE (\ref{ODE}) exist for any $\alpha>2$ but a unique solution exists for $\alpha=2$ points towards the existence of a unique symmetric solution of the ODE (\ref{ODE}) with the correct asymptotics (\ref{limit}) when $\alpha<2$. Therefore, we expect that the spectrum of a single tachyonic scalar field is sufficient to reconstruct the wormhole geometry.

From the discussion above we can conclude that, without making additional assumptions about the sign of $m^2$ or the analyticity of the scale factor, it is not possible to reconstruct the wormhole geometry from a single scalar spectrum. In particular, although for $m^2\leq 0$ one spectrum is sufficient to reconstruct the scale factor, for $m^2>0$ additional data is required to perform the task. 

To understand this better, it is useful to look at how multiple solutions differ from each other asymptotically, when they exist (i.e. for $m^2>0$). This means to study the form of equation (\ref{addsol}) when keeping only leading terms near one of the boundaries. Using the ansatz (\ref{ansatz}) for $\tilde{a}_1$ and focusing on the left boundary at $z=-z_0$, we get
\begin{equation}
f''(z)+\frac{2\alpha-6}{(z+z_0)^2}f(z)=0.
\label{asymptf}
\end{equation}
Using a power law ansatz $f(z)\sim (z+z_0)^\gamma$ in equation (\ref{asymptf}) we find
\begin{equation}
    \gamma_\pm=\frac{1}{2}\pm \frac{\sqrt{9-8(\alpha-2)}}{2}
    \label{exponent}
\end{equation}
where we remind that $\alpha-2=m^2L^2$ and that $f(z)$ must be regular at the boundary (i.e. only $\Re(\gamma)\geq 0$ leads to meaningful solutions). An analogous result can be obtained for the right asymptotic boundary at $z=z_0$.

For the $m^2>0$ case of interest (where multiple solutions to the ODE (\ref{ODE}) exist), we can distinguish three cases:
\begin{enumerate}
    \item For $2<\alpha< 3$ ($0<m^2L^2< 1$) or $\alpha=25/8$ ($m^2L^2=9/8$), there is only one non-negative real value of $\gamma$. Therefore in general we get the asymptotic behavior $f(z)\sim c_+(z+z_0)^{\gamma_+}$. This represents a one-parameter family of solutions, and a given solution is determined by a specific value of $c_+$. 
    \item For $3\leq \alpha < 25/8$ ($1\leq m^2L^2 <9/8$), there are two non-negative real values of $\gamma$. The asymptotic behavior of the general solution is then given by $f(z)\sim c_+(z+z_0)^{\gamma_+}+c_-(z+z_0)^{\gamma_-}$, and we have a two-parameter family of solutions parametrized by $(c_+,c_-)$.
    \item For $\alpha > 25/8$ ($m^2L^2>9/8$), there are two complex values of $\gamma$ with positive real part $\Re(\gamma)=1/2$. The corresponding asymptotic behavior of the general solution can be written as
    \begin{equation}
         f(z)\sim c_1 (z+z_0)^{\frac{1}{2}}\cos\left[\frac{\sqrt{9-8(\alpha-2)}}{2}\log(z)\right]+c_2 (z+z_0)^{\frac{1}{2}}\sin\left[\frac{\sqrt{9-8(\alpha-2)}}{2}\log(z)\right].
    \end{equation}
   This represents a two-parameter family of solutions, parametrized by $(c_1,c_2)$.
\end{enumerate}
Since the near-boundary limit of the bulk theory corresponds to the UV limit of the dual microscopic field theory, we expect the short-distance expansions of various field theory quantities (short-distance two-point functions or regularized entanglement entropies for small regions) should contain enough information about the near-boundary behavior of the scale factor to distinguish the possible solutions. In particular, it should be possible to obtain the value of the parameters $c_\pm$ (or $c_{1,2}$) from such short-distance behavior. This would allow one to uniquely reconstruct the wormhole geometry for $m^2>0$ using a single scalar spectrum and the UV behavior of the corresponding boundary scalar two-point function. We leave a detailed analysis of this possibility to future work.

Taking a different route, we will now show that if we have access to two scalar field spectra for two non-interacting bulk scalar fields of different mass (for any sign of $m^2$), the wormhole geometry can be uniquely reconstructed. Note that two spectra are always sufficient to reconstruct the scale factor, but the discussion in the previous paragraphs suggests that they might not be necessary. 

\subsubsection*{Reconstructing the wormhole from two scalar spectra}

Let us now assume that we have access to two distinct mass spectra of scalar particles in the dual confining gauge theory, corresponding to two non-interacting scalar fields with different masses in the same wormhole geometry. Suppose that, given the knowledge of the two spectra, we reconstructed the two associated potentials $V_1(z)$ and $V_2(z)$ using the procedure described in the previous subsections. We can then evaluate the ODE (\ref{ODE}) for the two potentials at $z=0$, and take their difference. Since $\tilde{a}(z)$ is the same in both equations, and we can obtain the values of $\alpha_1$, $\alpha_2$ from the asymptotic spectra, this yields an initial condition for the scale factor at the center of the wormhole:
\begin{equation}
    \tilde{a}(0)=\tilde{a}_0\equiv \sqrt{\left|\frac{V_1(0)-V_2(0)}{\alpha_1-\alpha_2}\right|}.
    \label{a0}
\end{equation}
We can now solve the ODE (\ref{ODE}) for any of the two potentials imposing initial conditions at the center of the wormhole: $\tilde{a}(0)=\tilde{a}_0$, $\tilde{a}'(0)=0$. This uniquely determines the wormhole scale factor.

It is worth noting that this procedure holds for any form of the scale factor (even in the non-physical case where it is not positive everywhere) and for both signs of $m^2$. Moreover, since in generic settings we expect to have multiple scalar fields in the wormhole geometry \cite{Antonini2022,Antonini2022short}, it is reasonable, and in fact to be expected, that we can have access to multiple spectra of scalar particles in the dual confining gauge theory.

\subsection{Wormhole reconstruction algorithm}\label{sec:reconstruction_algorithm_summary}

Let us summarize the algorithm that allows us to reconstruct the wormhole geometry from two scalar mass spectra in the underlying microscopic theory.\footnote{As discussed above, if we have access to a spectrum associated with a bulk scalar field with $m^2\leq 0$, there is no need for a second spectrum. We can therefore skip step 10, and solve the ODE (\ref{ODE}) imposing that the scale factor is symmetric and has the correct asymptotics (\ref{limit}).} These spectra may be given as input data, or they may be obtained from a knowledge of the microscopic correlators, as explained in Section \ref{sec:spectrum_from_correlator}. We will assume that the spectra are of the asymptotic form $\lam_j\sim Z j^2+Aj+B\log j+C$, with $Z,A>0$ and $B,C$ arbitrary. As we have seen, this is a necessary condition for a microscopic spectrum to have a dual description in terms of a free scalar field in a wormhole background. The reconstruction algorithm is as follows:

\begin{enumerate}
    \item Start with the first spectrum; call it $\lam_j$. Fit the asymptotic values of $\lam_j$ to the curve $Zj^2+Aj+B\log j+C$ and determine the constants $Z,A,B,C$.
    \item Using \eqref{eq:z_0_vs_Z}-\eqref{eq:c_vs_C} determine $z_0,\alpha,\beta,c$, and from \eqref{eq:Vt} construct $\Vt(z)$.
    \item Choose an integer $\jmax>0$. The wormhole scale factor will be accurately reconstructed on scales larger than $z_0/\jmax$. 
    \item Compute the lowest $\jmax+1$ eigenvalues $\lamt_0,\lamt_1,\dots,\lamt_{\jmax}$ by solving the Schr\"odinger equation
    \begin{align}
        -u''(z)+\Vt(z)u(z)=\lamt_j u(z),
    \end{align}
    subject to normalizable boundary conditions, $\um(\pm z_0) = 0$. We are using $\up$ and $\um$ to denote the normalizable and non-normalizable components to mode functions; see \eqref{eq:u_asymptotics}.
    \item Define $\Dbar \equiv \sqrt{\alpha+1/4}$. For $0\le j \le \jmax$ compute $\ut_j(z)$ by solving
    \begin{align}
        \ut_j''(z)=\big[\Vt(z)-\lam_j\big]\ut_j(z),
    \end{align}
    subject to the initial conditions
    \begin{align}
        \ut_j^{(+)}(-z_0)&=\frac{1}{2\Dbar},\\
        \ut_j^{(-)}(-z_0)&=0.
    \end{align}
    \item For $0\le j \le \jmax$, obtain $\omega'(\lam_j)$ from \eqref{eq:omega_prime_approx}.
    \item For $0\le j \le \jmax$, define $\yt_j(z)\equiv 2\frac{\ut_j(-z)-(-1)^j \ut_j(z)}{\omega'(\lam_j)}$.
    \item Determine $u_0,u_1,\dots,u_{\jmax}$ by solving the system of equations
    \begin{align}
        u_j''(z)&=\left(\Vt(z)+\sum_{i=0}^{\jmax}\big[\yt_i(z)u_i(z)\big]'-\lam_j\right)u_j(z)
    \end{align}
    subject to the boundary conditions
    \begin{align}
        u_j^+(-z_0)&=\frac{1}{2\Dbar},\\
        u_j^-(-z_0)&=0.
    \end{align}
    \item Obtain the potential by setting
    \begin{align}
        V(z)=\Vt(z)+\frac{1}{2}
        \sum_{j=0}^{\jmax}\Big(\big[\yt_j(z)u_j(z)\big]'+\big[\yt_j(-z)u_j(-z)\big]'\Big).
    \end{align}
    Here we have ensured by hand that the reconstructed $V(z)$ is symmetric, since this is not automatically true for finite $\jmax$.\footnote{We would like the approximate potential to be symmetric so that the resulting scale factor in the wormhole picture is symmetric and thus the analytically continued scale factor in the cosmology picture is real.}
    \item Repeat steps 1-9 to reconstruct the potential associated with the second spectrum. Now we have reconstructed two potentials, $V_1(z)$ and $V_2(z)$, from the respective asymptotic spectra. The value of the scale factor at the center of the wormhole is then given by
    \begin{equation}
        \tilde{a}(0)=\tilde{a}_0\equiv \sqrt{\left|\frac{V_1(0)-V_2(0)}{\alpha_1-\alpha_2}\right|}
    \end{equation}
    where $\tilde{a}(z)=a(z)/L$.
    \item Compute the wormhole scale factor by solving
    \begin{equation}
    \begin{cases}
    \tilde{a}''(z)+(\alpha-2)\tilde{a}^3(z)-V(z)\tilde{a}(z)=0\\
    \tilde{a}(0)=\tilde{a}_0\\
    \tilde{a}'(0)=0
    \end{cases}
    \end{equation}
    \item Finally, if one is interested in the corresponding FRW scale factor, it can be obtained by analytic continuation. Numerically this can be approximated by flipping the sign of every other non-zero term in the Taylor expansion of $\tilde{a}(z)$ around $z=0$. Namely,
    \begin{align}
        \afrw(t) \approx \sum_n \frac{(-1)^{n/2}}{n!}\frac{d^{2n}\tilde{a}(z)}{dz^{2n}}\Big|_{z=0}t^{2n}.
    \end{align}
\end{enumerate}

\subsection{Numerical example of wormhole reconstruction}
\label{sec:reconstruction_example}

In this subsection we provide an explicit example of implementation of the reconstruction algorithm summarized in Section \ref{sec:reconstruction_algorithm_summary}. The scale factor we wish to reconstruct has the analytic form\footnote{In this explicit example we did not include non-perturbative terms for the sake of numerical efficiency. Note that this scale factor does not have any specific physical meaning, and in particular it does not analytically continue to a Big Bang-Big Crunch cosmological scale factor. However, it is simple enough for the reconstruction to not be computationally too expensive, while its behavior is non-trivial enough to make the reconstruction example significant.}
\begin{equation}
    \tilde{a}(z)=\frac{1}{z_0+z}+\frac{1}{z_0-z}+2\cos\left(\frac{2\pi z}{z_0}\right)
    \label{eq:scalenum}
\end{equation}
where we set $z_0=1$. As a first step, we need to compute the asymptotic spectra of two potentials $V_1(z)$ and $V_2(z)$ of the form
\begin{equation}
    V(z)=\frac{a''(z)}{a(z)}+m^2L^2a^2(z)
    \label{eq:pot2}
\end{equation}
for two different choices of $mL$, and the first few eigenvalues of the same potentials, i.e. $\lambda_j^{(1)}$, $\lambda_j^{(2)}$ for $j<j_{max}$. This data will serve as input for the reconstruction procedure. Our choices of masses is $m_1L=1/2$, $m_2L=3/2$, and we choose $\jmax=50$, i.e. we reconstruct the potentials from their first 51 eigenvalues. The first step in our algorithm is to fit the asymptotic spectra as in equation (\ref{eq:fitformula}).
\begin{figure*}[h]
    \centering
    \subfloat[]{
        \includegraphics[width=0.45\linewidth]{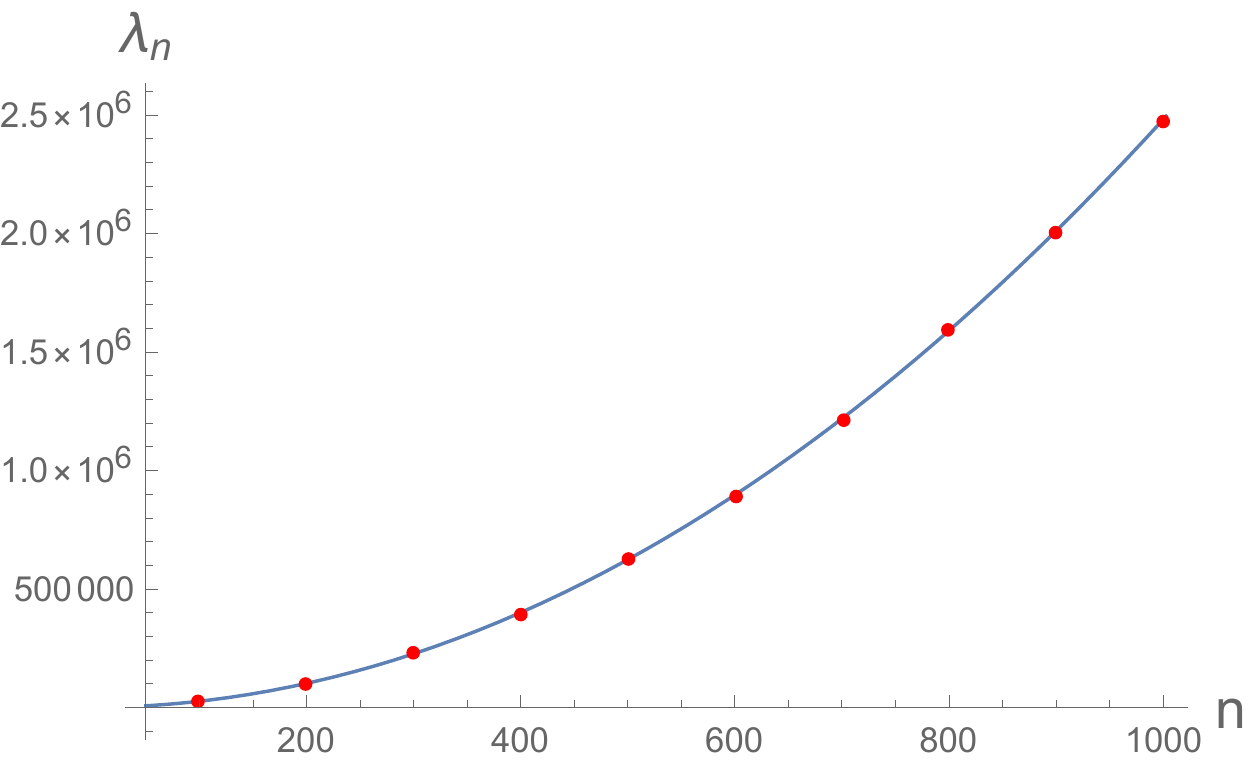}}
    \subfloat[]{
        \includegraphics[width=0.45\linewidth]{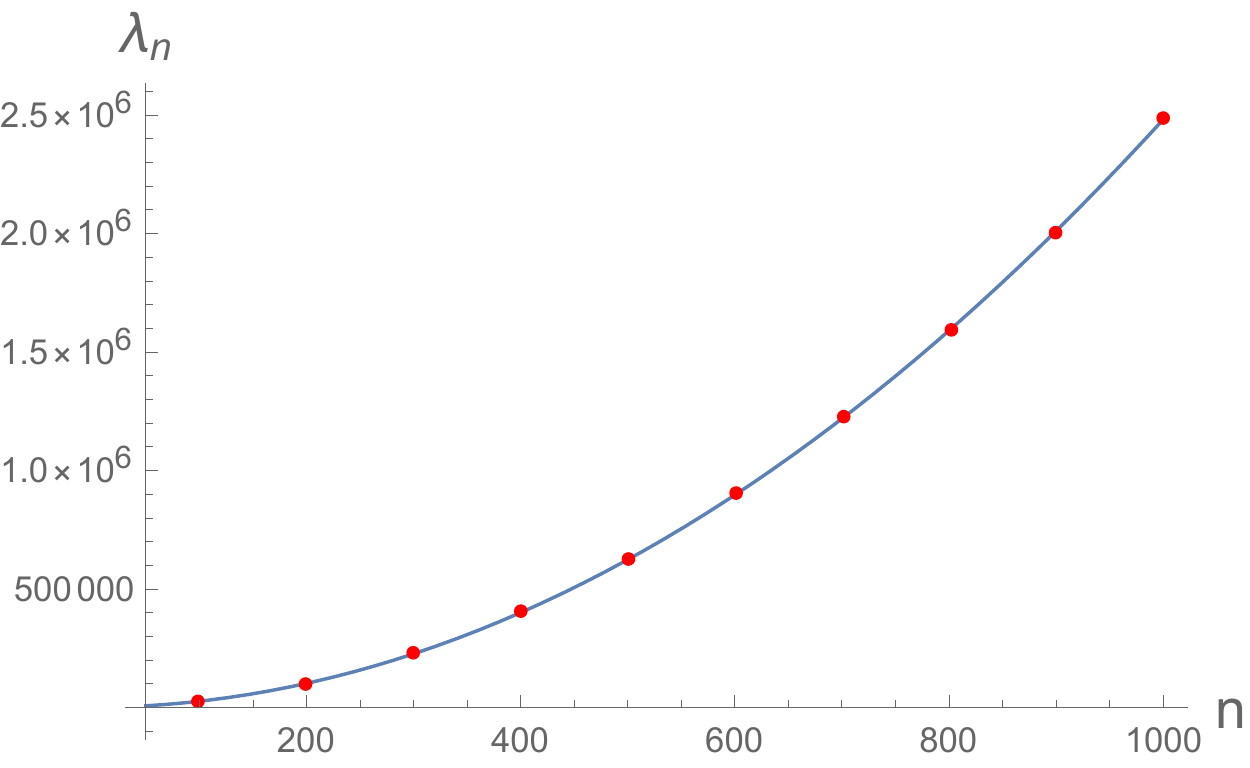}}
       \caption{Asymptotic spectra generated by the potentials (\ref{eq:pot2}) for $\tilde{a}(z)$ given by (\ref{eq:scalenum}). The fitted data also include all eigenvalues with $j\in [80,99]$ and $j\in [901,999]$, omitted in the plots. (a) $mL=1/2$. The result of the fit is $Z_1=2.467401$, $A_1=10.270160$, $B_1=-3.737523$, $C_1=39.926397$. (b) $mL=3/2$. The result of the fit is $Z_2=2.467401$, $A_2=12.934720$, $B_2=6.267183$, $C_2=36.382785$.}
    \label{recpotential}
\end{figure*}
Using equations (\ref{eq:z_0_vs_Z}), (\ref{eq:a_vs_A}), (\ref{eq:b_vs_B}) and (\ref{eq:c_vs_C}), this yields $z_0=1$ for both spectra and the parameters of the test potentials (\ref{eq:Vt}):
\begin{equation}
    \begin{aligned}
        &\alpha_1=2.250097, \hspace{1cm} \beta_1=-3.737523, \hspace{1cm} c_1=33.996971;\\
         &\alpha_2=4.249159, \hspace{1cm} \beta_2=6.267183, \hspace{1cm} c_2=21.274552.\\
    \end{aligned}
    \label{eq:fitted_par}
\end{equation}
We can then compute the first 51 eigenvalues of the two test potentials defined as in (\ref{eq:Vt}), and apply the rest of the algorithm described in Section \ref{sec:reconstruction_algorithm_summary} to reconstruct the two potentials $V_1(z)$ and $V_2(z)$ from their spectra.
Plots of the reconstructed potentials, together with the respective test potentials (\ref{eq:Vt}) and the exact potentials computed directly from the scale factor (\ref{eq:scalenum}) are reported in Figure \ref{fig:recpot}. The reconstructed potentials match the exact potentials up to corrections of order $1/j_{max}\sim 10^{-2}$, as expected.

\begin{figure*}[h]
    \centering
    \subfloat[]{
        \includegraphics[width=0.45\linewidth]{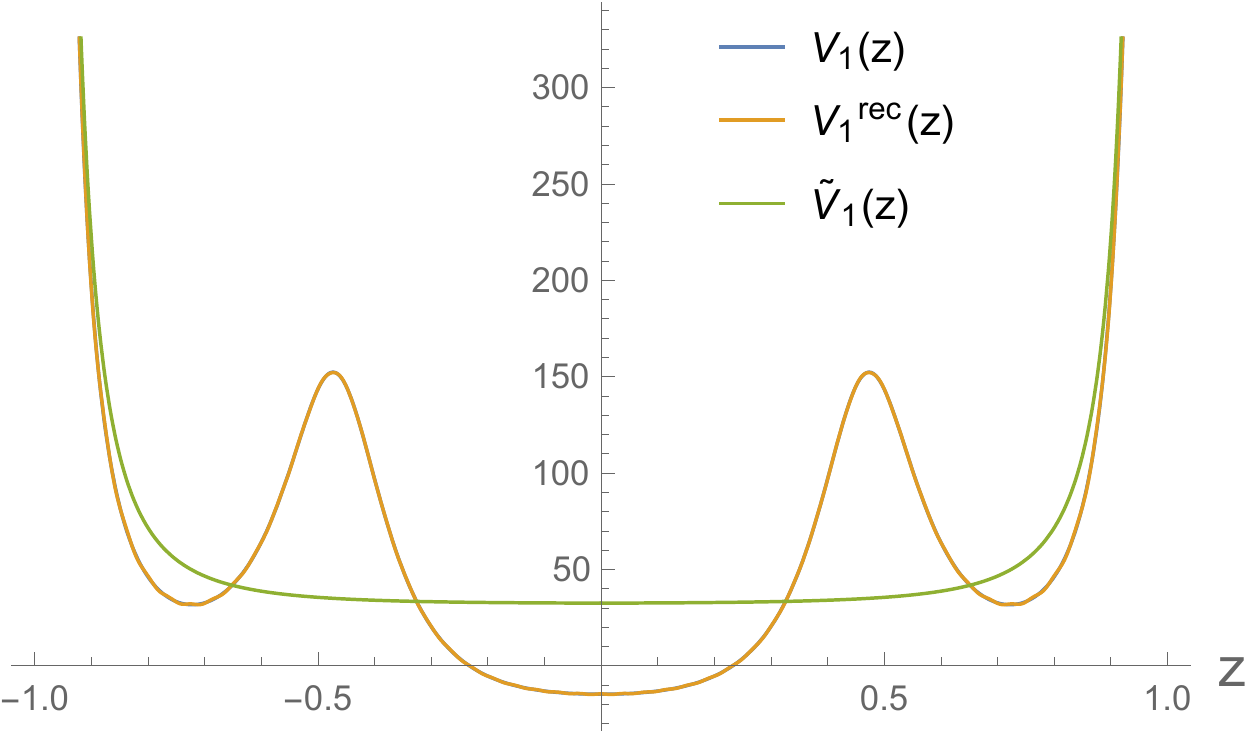}}
    \subfloat[]{
        \includegraphics[width=0.45\linewidth]{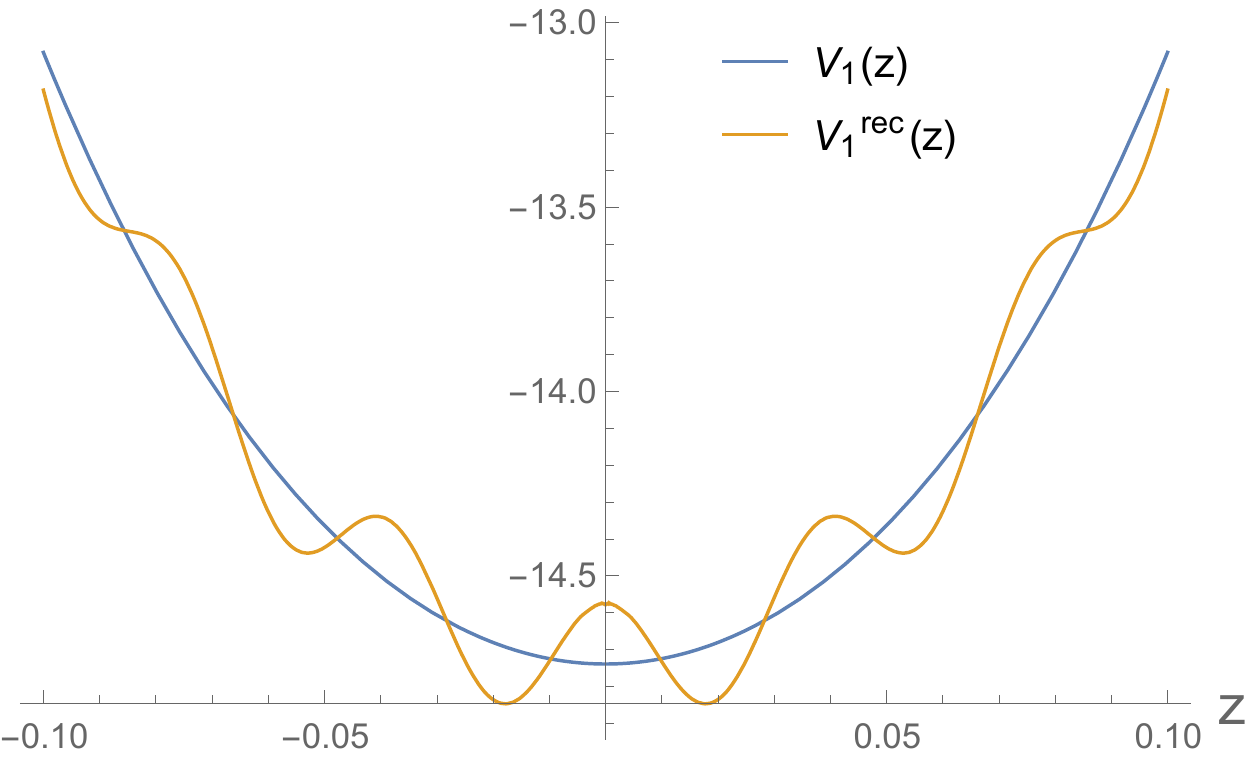}}\\
         \subfloat[]{
        \includegraphics[width=0.45\linewidth]{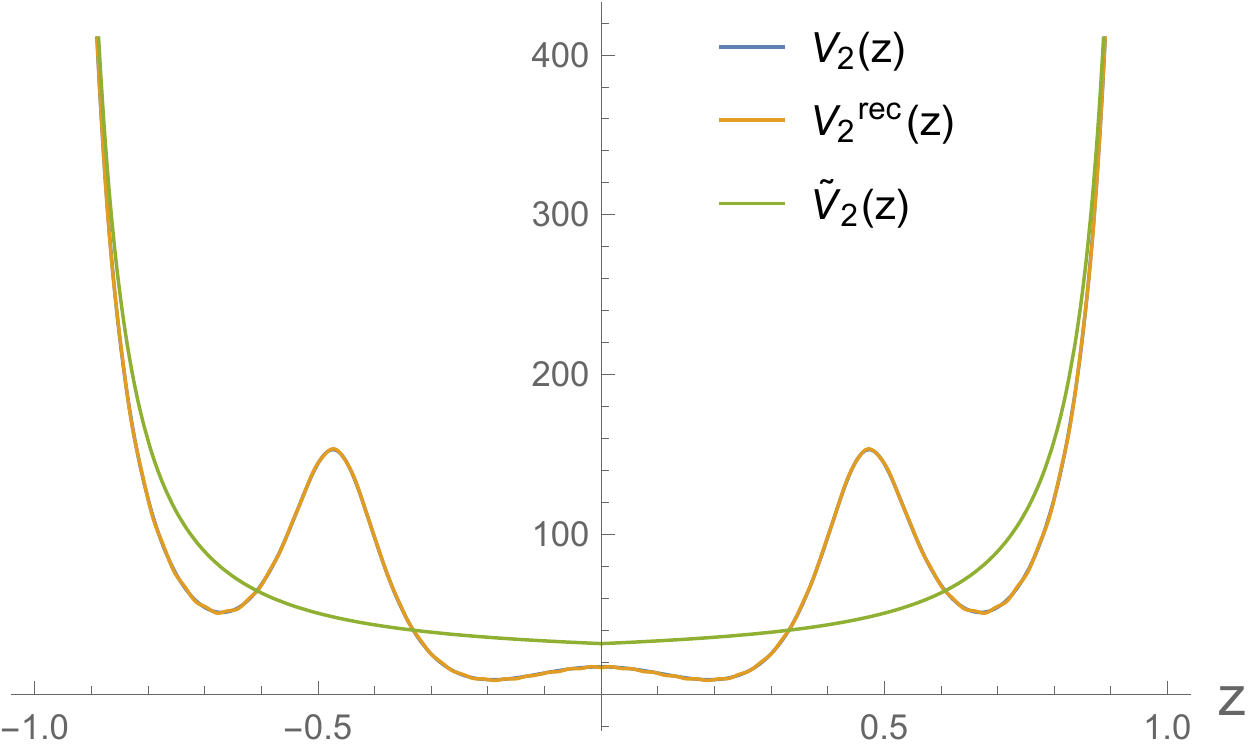}}
    \subfloat[]{
        \includegraphics[width=0.45\linewidth]{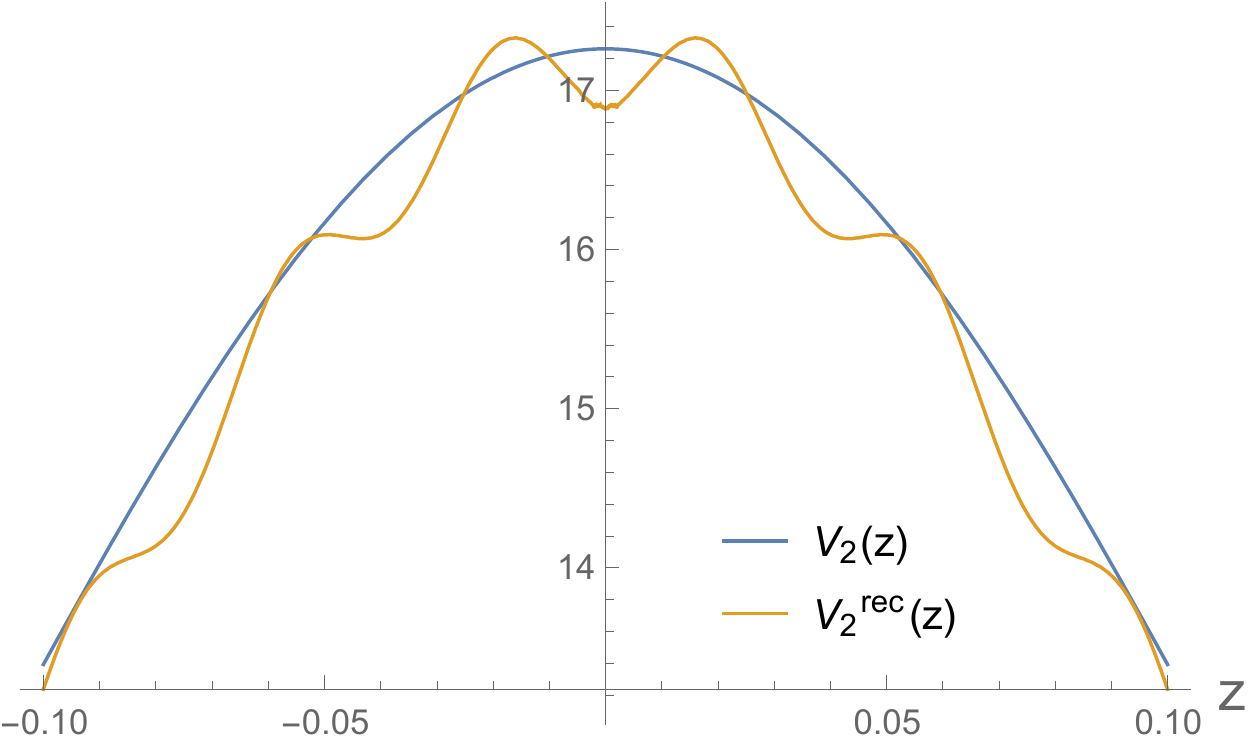}}
       \caption{True potentials $V_i(z)$ (computed by equation (\ref{eq:pot2}) with $\tilde{a}(z)$ given by (\ref{eq:scalenum})), reconstructed potentials $V^{rec}_i(z)$, and test potentials $\tilde{V}_i(z)$ (given by equation (\ref{eq:Vt}) with parameters (\ref{eq:fitted_par})) as a function of $z$. (a) $V_1(z)$, $V_2^{rec}(z)$, and $\tilde{V}_1(z)$ for $mL=1/2$. (b) Detail of $V_1(z)$ and $V_1^{rec}(z)$ around $z=0$ for $mL=1/2$. As expected, the reconstructed potential is a good approximation to the true potential up to corrections of order $1/j_{max}\sim 10^{-2}$. (c) $V_2(z)$, $V_2^{rec}(z)$, and $\tilde{V}_2(z)$ for $mL=3/2$. (d) Detail of $V_2(z)$ and $V_2^{rec}(z)$ around $z=0$ for $mL=3/2$. As expected, the reconstructed potential is a good approximation to the true potential up to corrections of order $1/j_{max}=\sim 10^{-2}$.}
    \label{fig:recpot}
\end{figure*}

We have now completed steps 1-9 of the reconstruction algorithm of Section \ref{sec:reconstruction_algorithm_summary} for two spectra, and reconstructed the two potentials $V_1(z)$ and $V_2(z)$ to good accuracy. The last two steps allow us to obtain the wormhole scale factor from such potentials. First, we must obtain the initial condition $\tilde{a}(0)=\tilde{a}_0$ as in equation (\ref{a0}). We could do this by directly subtracting $V_1^{rec}(0)$ from $V_2^{rec}(0)$, which would already allow us to obtain the initial condition to good accuracy, and obtain a good reconstruction of the wormhole scale factor. However, note that the reconstructed potentials have an oscillating behavior around $z=0$, because we only used a finite amount of eigenvalues to perform the reconstruction (see Figure \ref{fig:recpot}). Therefore, we can further improve the accuracy of the initial condition by averaging $V_1^{rec}(z)$ and $V_2^{rec}(z)$ over one oscillation period around $z=0$, and then using such average values in place of $V_1^{rec}(0)$, $V_2^{rec}(0)$ in (\ref{a0}). Using this procedure and the values (\ref{eq:fitted_par}) of $\alpha_1$ and $\alpha_2$, we obtain $\tilde{a}_0=3.989952$, where the exact value is $\tilde{a}_0^{true}=4$ (note that the error is again of order $1/j_{max}$, as expected). Finally, we can solve the ODE (\ref{ODE}) for either $V_1(z)$ or $V_2(z)$ with initial conditions $\tilde{a}(0)=\tilde{a}_0$, $\tilde{a}'(0)=0$, and obtain the wormhole scale factor. The numerically reconstructed $\tilde{a}_{rec}(z)$ is reported in Figure \ref{fig:scalerec} along with the exact scale factor (\ref{eq:scalenum}). The two agree with very good accuracy, up to corrections of order $1/j_{max}$.
\begin{figure*}[h]
    \centering
    \subfloat[]{
        \includegraphics[width=0.45\linewidth]{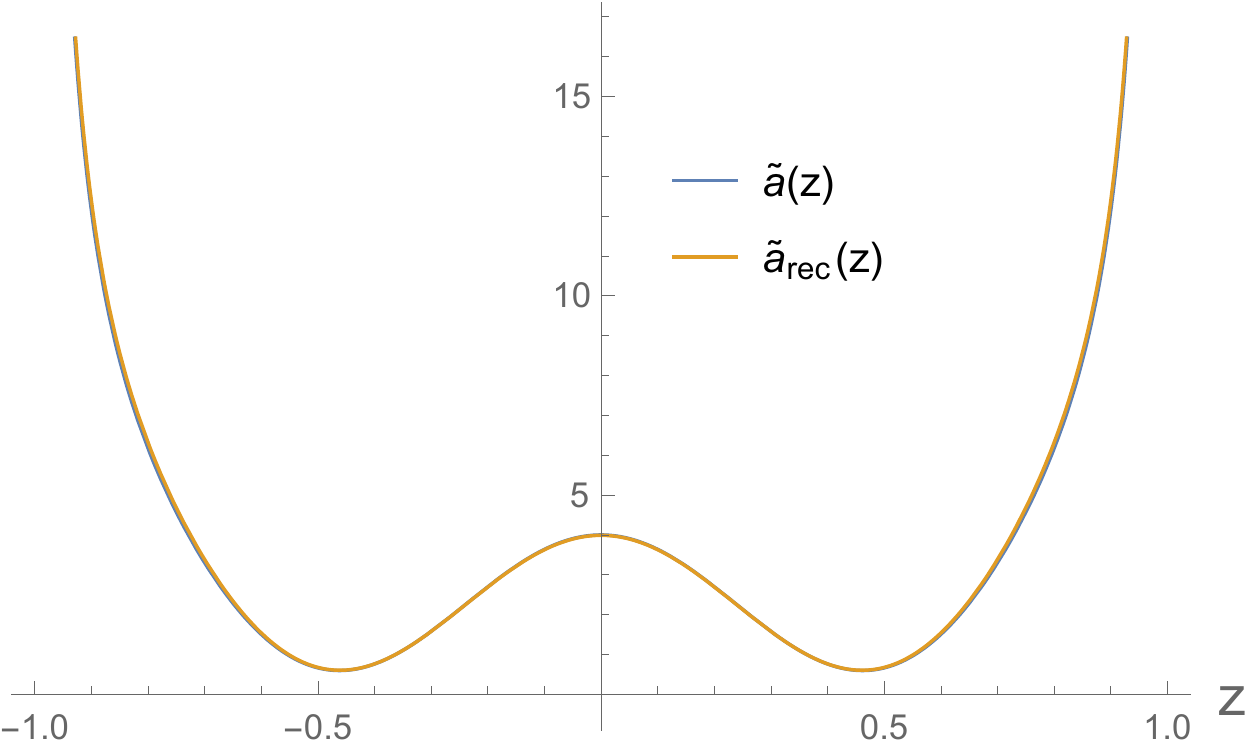}}
    \subfloat[]{
        \includegraphics[width=0.45\linewidth]{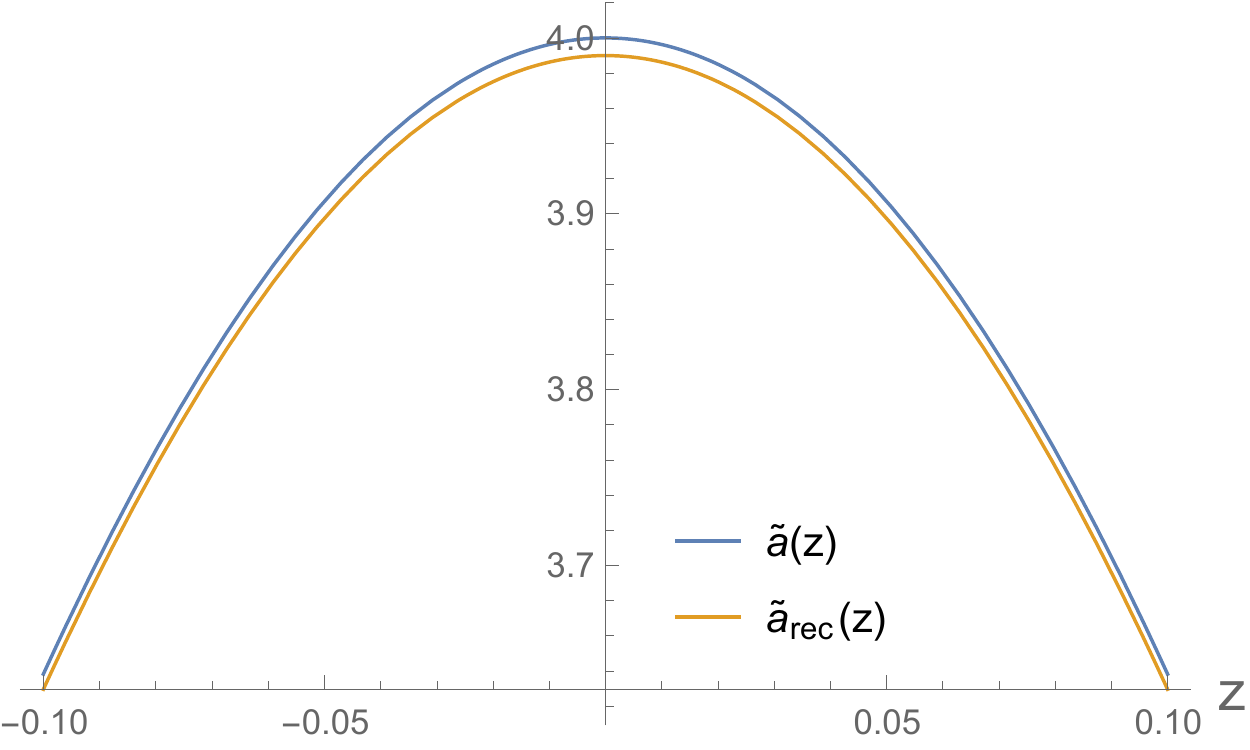}}\\
         \subfloat[]{
        \includegraphics[width=0.45\linewidth]{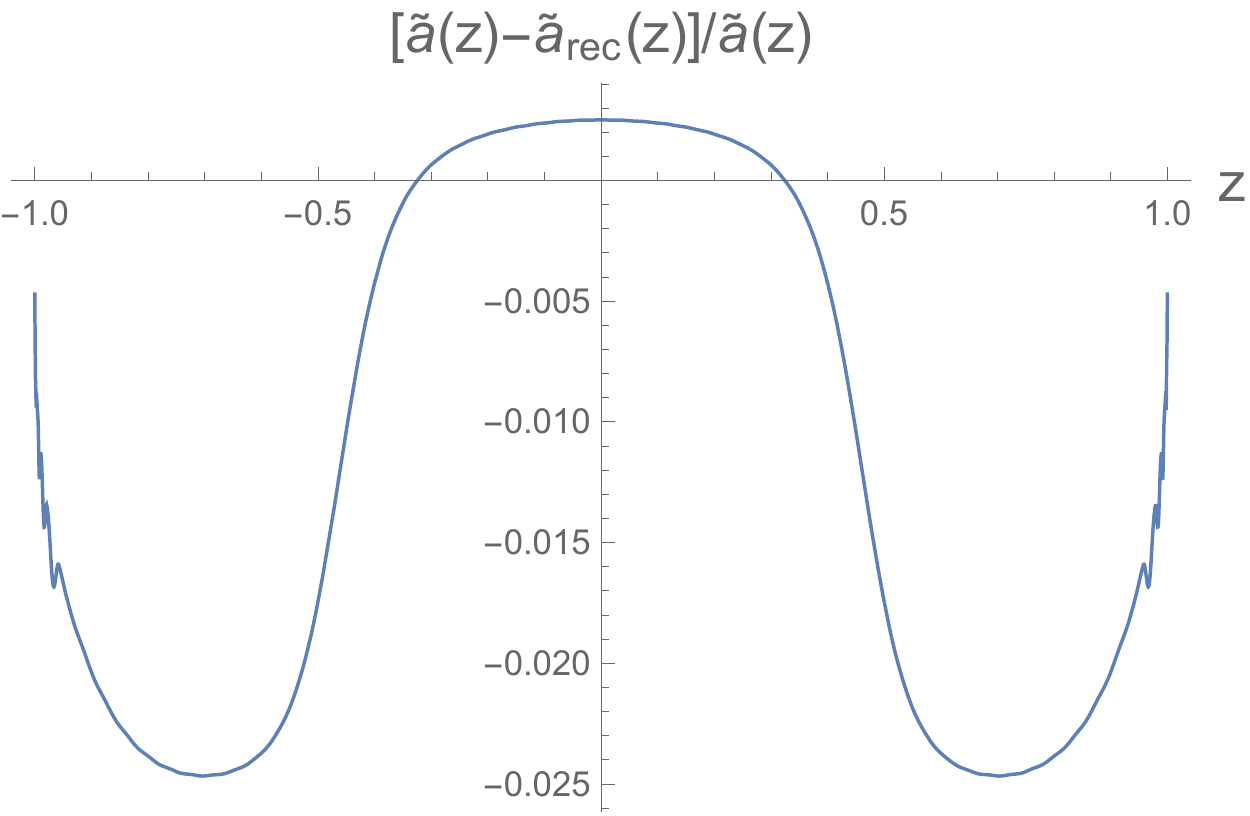}}
       \caption{True scale factor $\tilde{a}(z)$ (given by equation (\ref{eq:scalenum})) and reconstructed scale factor $\tilde{a}_{rec}(z)$ as a function of $z$. (a) The scale factors are almost indistinguishable at large scales. (b) Detail around the center of the wormhole at $z=0$. (c) Discrepancy between the true scale factor and the reconstructed scale factor, quantified by $[\tilde{a}(z)-\tilde{a}_{rec}(z)]/\tilde{a}(z)$. As expected, the reconstruction is accurate up to corrections of order $1/j_{max}\sim 10^{-2}$.}
    \label{fig:scalerec}
\end{figure*}

\subsection{Scale factor from heavy correlators}
\label{sec:metric_from_heavy}

In the previous sections we have shown how to reconstruct the wormhole geometry given certain information about the dual confining gauge theory. In particular we found that the reconstruction is possible if we are given two sets of microscopic correlators, associated with 3D CFT scalar operators $\mathcal O_1$ and $\mathcal O_2$ of different scaling dimensions. The duals of these operators are scalar fields of different masses living in the wormhole geometry. So far we have not assumed anything about the masses of these scalar fields, i.e. we have made no restrictions on the dimensions of the microscopic operators.

However, we will now see that the situation is much simpler if we have access to the two-point correlator of a heavy operator $\mathcal O$, i.e. an operator with a large scaling dimension. Namely we will find that a knowledge of this correlator is sufficient to reconstruct the wormhole geometry; there is no need for additional information from a different operator. 

The key observation which allows for such a simplification is to note that for a scalar field of large mass $m$ the spatial two-point function in the wormhole is given approximately by 
\be
\langle \phi(x) \phi(y) \rangle \sim e^{-m L(x,y)},
\ee
where $L$ is the geodesic distance between $x$ and $y$. Via the extrapolate dictionary, this leads to a spatial two-point function
\be\label{eq:heavy_correlator}
\langle {\cal O}(w) {\cal O}(0) \rangle \sim e^{-m L(w)}
\ee
for the heavy operator $\mathcal O$ dual to the field $\phi$, where $L$ is the regulated length of the geodesic. See \cite{Faulkner:2018faa} for a review. Since geodesics are manifestly associated with the geometry of the spacetime, it seems plausible that a knowledge of various geodesic lengths via the correlators of $\mathcal O$ is enough information to explicitly reconstruct the wormhole. We will now show that this is indeed the case. The discussion is similar to \cite{Bilson:2010ff} though our final approach is somewhat different and we get a more explicit result.

Let us begin by expressing the wormhole metric in the form
\be
\label{genmet}
ds^2 = A^2(z) dz^2 + B^2(z) dx_\mu dx^\mu .
\ee
This can be mapped to the familiar conformally flat gauge by changing to a coordinate $\tilde z$ defined by $Adz=Bd\tilde z$. Letting $x$ be one of the spatial coordinates of the field theory, we have that spatial geodesics $z(x)$ lying in the $z-x$ plane are determined by extremizing the action
\be
\label{tpact}
S = \int dz \sqrt{A^2(z) +  B^2(z) \left( \frac{dx}{dz} \right)^2  }  \; .
\ee
The conserved quantity associated with translation invariance in the $x$ direction is
\be
\label{eq:mom}
p_x = \frac{\partial\cal L}{\partial x'} = \frac{B^2(z)}{\sqrt{ A^2(z) \left( \frac{dz }{dx} \right)^2 + B^2(z) }}.
\ee
While the action itself (giving the length of a geodesic) diverges, the variation of the action with respect to one of the endpoints is finite. We recall that the variation of the action about an on shell configuration with respect to an endpoint variation $\delta x_i$ in the direction $x_i$ can be expressed as
\be
\delta S = p_i \delta x_i \;.
\ee
Taking $x_i$ to be the direction $x$ we find
\be
\frac{\delta S}{\delta x} = p_x.
\ee

For geodesics with two endpoints at the same asymptotic boundary separated by $w$, there will be some deepest point $z_*(w)$ to which the geodesic penetrates. At this point, $dz/dx = 0$, so from \eqref{eq:mom} we have that $p_x = B(z_*)$. Thus, we have the relation
\be
\label{vareq}
B(z_*) = \frac{\delta S}{\delta x} \equiv S_1(w).
\ee
The right side here can be computed from CFT data. Namely, the regulated version of the geodesic length $S$ is given by $L(w)$ and so using the heavy correlator \eqref{eq:heavy_correlator} we obtain
\be\label{eq:S1}
S_1(w) = -\frac{1}{m} \frac{d}{dw} \ln \langle {\cal O}(w) {\cal O}(0) \rangle .
\ee

Now, using $p_x=B(z_*)$ and \eqref{eq:mom} gives us an equation for $dx/dz$ which can be integrated along half the geodesic curve to obtain another relation between $w$ and $z_*$. Namely
\be
\label{inteq}
w/2 = \int_{z_\infty}^{z_*} dz \frac{A(z)}{B^2(z)} \frac{1}{\sqrt{\frac{1}{B^2(z_*)} - \frac{1}{B^2(z)}}},
\ee
where $z_\infty$ is the asymptotic value of the coordinate $z$.\footnote{This assumes that $z$ increases toward the interior. Otherwise the limits of integration would be reversed.} By a choice of gauge, we can take $B(z)$ to be any convenient function. In this case, the CFT data $S_1(w)$ and the known function $B(z)$ can be used in equation (\ref{vareq}) to write $w$ on the left side of (\ref{inteq}) as some function of $z_*$. The equation (\ref{inteq}) can then be understood as an integral equation (specifically, a weakly singular Volterra equation of the first kind) for the remaining undetermined metric function $A(z)$.

The integral equation
\be
f(s) = \int_a^s \frac{y(t) dt}{\sqrt{g(s) - g(t)}} 
\ee
for $y(t)$ has solution
\be
y(s) = \frac{1}{\pi} \frac{d}{ds} \left[ \int_a^s dt  \frac{f(t) g'(t)}{\sqrt{g(s) - g(t)}} \right].
\ee
Applying this to our case, we have that
\be\label{eq:A(z)}
A(z) = -\frac{1}{\pi} B^2(z) \frac{d}{dz} \left[ \int_{z_\infty}^{z} d \hat{z}  \frac{w(\hat{z}) B(z) B'(\hat{z})}{B^2(\hat{z}) \sqrt{B^2(\hat{z}) - B^2(z)}} \right]
\ee
Via this equation and (\ref{vareq}), we are able to fully determine the metric in terms of the heavy microscopic two-point function.

We can make things more explicit by choosing the function $B(z)$ to be equal to $S_1(z)$. In this case, from \eqref{vareq} we have
\be
S_1(z_*) =  S_1(w),
\ee
so we have $z_*=w$. Thus, our choice of $B$ corresponds to taking the $z$ coordinate of a point in the wormhole to be the width of a CFT interval whose geodesic penetrates to that point. Then, from \eqref{eq:A(z)} we have
\be
A(w) = -\frac{1}{\pi} S_1^2(w) \frac{d}{dw} \left[ \int_{0}^{w} d \hat{w}  \frac{\hat{w} S_1(w) S_1'(\hat{w})}{S_1^2(\hat{w}) \sqrt{S_1^2(\hat{w}) - S_1^2(w)}} \right],
\ee
and the metric is given by
\be\label{eq:heavy_metric}
ds^2 = A^2(w) dw^2 + S_1^2(w) dx_\mu dx^\mu.
\ee
In this way, we have explicitly expressed the metric in terms of the microscopic data encoded in the function $S_1(w)$ via equation \eqref{eq:S1}.

The form \eqref{eq:heavy_metric} of the reconstructed metric suggests the change of variables 
\be
\label{cov}
\mu(w) = -\frac{1}{\pi} \left[ \int_{0}^{w} d \hat{w}  \frac{\hat{w} S_1(w) S_1'(\hat{w})}{S_1^2(\hat{w}) \sqrt{S_1^2(\hat{w}) - S_1^2(w)}} \right] \; .
\ee
In terms of this radial coordinate, we have simply
\be
ds^2 = S_1^4(w(\mu)) d\mu^2 + S_1^2(w(\mu)) dx_\mu dx^\mu.
\ee

\subsection{Metric from entanglement entropy}
\label{sec:metric_from_entanglement}

Via a very similar analysis, we could also extract the metric from the entanglement entropy for a strip-shaped subsystem of one of the 3D CFTs.\footnote{To be precise, we should also include some part of the 4D CFT here, but since this has many fewer local degrees of freedom, we expect that the precise choice if this region shouldn't matter too much.}

We consider the entanglement entropy of a strip of width $w$ in one of the 3D CFTs. We assume that the result is well approximated by the area of a RT surface in the dual 4D traversable wormhole.\footnote{This expectation could be modified significantly if the volume of the internal space present in the UV-complete realization of our setup changes with radial position.}

The entanglement entropy has the usual UV divergence but also an IR divergence due to the infinite volume of the strip. We consider the quantity
\be
s_1(w) = \frac{d}{d w} s(w)
\ee
where $s(w)$ is the entanglement entropy per unit of length of the strip. In this case, it will be convenient to take the metric as
\be
\label{genmet2}
ds^2 = \frac{A^2(z)}{B(z)} dz^2 + B(z) dx_\mu dx^\mu \; ,
\ee
where $A$ and $B$ are in general different than the similarly named functions in Section \ref{sec:metric_from_heavy}. With this choice, the expression for the action  determining the extremal surface trajectory is precisely the same as the expression (\ref{tpact}) that we used in the previous section. Precisely the same analysis then tells us that
if we choose $B(z) = s_1(z)$, such that $z$ is identified with $w$ (i.e. the width of a strip whose RT surface barely reaches our point $z^*$), then the remaining metric function is 
\be
A(w) = -\frac{1}{\pi} s_1^2(w) \frac{d}{dw} \left[ \int_{0}^{w} d \hat{w}  \frac{\hat{w} s_1(w) s_1'(\hat{w})}{s_1^2(\hat{w}) \sqrt{s_1^2(\hat{w}) - s_1^2(w)}} \right].
\ee
However, differently from Section \ref{sec:metric_from_heavy}, the final metric is 
\be
\label{genmet4}
ds^2 = \frac{A^2(w)}{s_1(w)} dw^2 + s_1(w) dx_\mu dx^\mu .
\ee
In this case, a change of variables analogous to (\ref{cov}) with $S_1 \to s_1$ gives
\be
\label{genmet5}
ds^2 = s_1^3(w(\mu)) d\mu^2 + s_1(w(\mu)) dx_\mu dx^\mu .
\ee

\section{Discussion}
\label{sec:discussion}

In this paper we have investigated the relationship between observables in a bulk theory with an AdS planar eternal traversable wormhole background geometry and observables in the corresponding dual confining microscopic theory. We have studied what properties of the microscopic theory can be deduced from the existence of a bulk dual wormhole, and, conversely, how the wormhole geometry can be reconstructed from observables in the microscopic theory. 

The presence of the wormhole determines specific properties of the dual microscopic theory: spectrum of massive particles, existence of a massless sector, properties of two-point functions, and entanglement structure. In particular, the behavior of bulk quantum scalar and gauge fields implies the existence of a discrete spectrum of massive scalar and vector particles, along with a massless sector (associated with bulk gauge fields). 

On the other hand, certain observables in the microscopic confining theory (two-point functions of two 3D CFT scalar operators with different scaling dimensions, correlators of 3D CFT operators with large scaling dimensions, and entanglement entropies of subregions of the microscopic theory) allow one to reconstruct the dual wormhole geometry. A central result of the paper is the explicit algorithm we derived to reconstruct the wormhole metric from two-point functions of 3D CFT scalar operators.

Finally, although the results just outlined are interesting in their own right, we would like to also emphasize two consequences they have on the Big Bang-Big Crunch FRW cosmologies related to our wormhole theory by double analytic continuation \cite{VanRaamsdonk:2021qgv,Antonini2022,Antonini2022short}. 

First, the existence of a massless sector in the dual confining theory implies the existence of long-range correlations in its ground state, which was not obvious a priori. In the bulk effective field theory, this translates to correlations at every scale in the wormhole background along the non-compact directions. Since the $z=t=0$ codimension-2 surface is left invariant by the double analytic continuation relating the wormhole to the FRW cosmological universe \cite{Antonini2022,Antonini2022short}, this fact implies the existence of correlations at every scale in the cosmology at the (late) time-symmetric point where the universe stops expanding and starts re-collapsing. Therefore, the special state of the cosmology at the time-symmetric point (defined by the Euclidean path integral described in \cite{Antonini2022}) has built-in correlations even between regions that are never in causal contact at any point in the cosmological evolution. This feature helps to solve the cosmological horizon problem in terms of the properties of the underlying microscopic theory, without the need for inflation \cite{Antonini2022,Antonini2022short}.

Second, as we have already pointed out, the slicing duality implies that the FRW metric can be obtained from microscopic confining gauge theory observables by reconstructing the wormhole geometry, and then analytically continuing the resulting scale factor. Note that, if we do not make use of the slicing duality, the FRW universe is encoded in a very complex way into the physics of a specific highly excited state of only the microscopic 4D auxiliary degrees of freedom coupling the two 3D CFTs \cite{VanRaamsdonk:2021qgv,Antonini2022,Cooper2018,Antonini2019}. In fact, in a doubly holographic setup, where the auxiliary degrees of freedom are also holographic, the cosmology can be understood as living on an end-of-the-world (ETW) brane behind the horizon of a 5D black hole \cite{Cooper2018,Antonini2019}. Without relying on double holography, the cosmology can be seen as an ``entanglement island'' associated with large subregions of the 4D auxiliary system \cite{Antonini2022}. In both interpretations, reconstructing the cosmological evolution using observables in the excited state of the 4D auxiliary system requires computing expectation values of extremely complicated field theory operators\footnote{It has been shown in \cite{Cooper2018} that the entanglement entropy at early times of sufficiently large subregions of the boundary field theory can probe the cosmological scale factor. In fact, the RT surface associated with such subregions penetrates the black hole horizon and ends on the ETW brane, in analogy with the analysis of Hartman and Maldacena \cite{Hartman2013a}.}. Nonetheless, by means of the slicing duality, the cosmological evolution can be probed by reconstructing the corresponding wormhole geometry using confining field theory observables as simple as the spectrum of massive particles or two-point functions of operators with large scaling dimension, and then analytically continuing the scale factor. This fact shows explicitly how the slicing duality introduced in \cite{VanRaamsdonk:2021qgv,Antonini2022} allows one to relatively easily reconstruct holographic cosmologies which would naively be extremely complex to probe.
 
 \section*{Acknowledgements}

We would like to thank Rodrigo A. Silva and Wucheng Zhang for helpful discussions and comments. This work is supported in part by the National Science and Engineering Research Council of Canada (NSERC) and in part by the Simons foundation via a Simons Investigator Award and the ``It From Qubit'' collaboration grant. PS is supported by an NSERC C-GSD award. SA is partially supported by a Leon A. Herreid Science Graduate Fellowship Award. This work is partially supported by the U.S. Department of Energy, Office of Science, Office of Advanced Scientific Computing Research, Accelerated Research for Quantum Computing program ``FAR-QC'' (SA) and by the AFOSR under grant number FA9550-19-1-0360 (BS).

\appendix

\section{Quantization of a gauge field in the wormhole background}
\label{app:gauge_field}

Consider the action\footnote{In general, a boundary term dependent on the choice of boundary conditions is present in the action. We omit here, see \cite{Marolf:2006nd}} for a gauge field in our wormhole background\footnote{In this appendix we will use latin indices $I,J=0,1,2,3$ to indicate 4D components, and greek indices $\mu,\nu=0,1,2$ for 3D components $t,x,y$.}:
\begin{equation}
    S_{gauge}=-\frac{1}{4}\int d^4x\sqrt{-g}F_{IJ}F^{IJ}
\end{equation}
leading to the equations of motion (which, together with the Bianchi identity for $F_{IJ}$, provide the four Maxwell's equations)
\begin{equation}
    \partial_I\left(\sqrt{-g}F^{IJ}\right)=0.
\end{equation}
In our four-dimensional wormhole, these can be recast in the form
\begin{equation}
    \eta^{IK}\partial_I\partial_K A_J-\partial_J\left(\eta^{IK}\partial_I A_K\right)=0
\end{equation}
where $\eta_{IJ}$ is the 4D Minkowski metric. Note that, as a result of the conformal invariance of the gauge field action in 4D, the equations of motion have no dependence on the wormhole scale factor $a(z)$. With a slight abuse of notation, in the rest of this appendix we will raise and lower indices using the Minkowski metric, rather than the wormhole metric $g_{IJ}=1/a^2(z)\eta_{IJ}$. We will work in Lorenz gauge
\begin{equation}
   \partial^IA_I=0
   \label{lorenzgauge}
\end{equation}
in which the equations of motion decouple, reducing to four independent Klein-Gordon equations for a massless field in flat spacetime.

\subsection{Boundary conditions}
\label{app:gauge_bc}

For simplicity of notation, we define here $\ell=2z_0$ and shift $z\to z-z_0$ such that the two boundaries are located at $z=0,\ell$. The behavior of the gauge field near the two boundaries is given by\footnote{The argument $(x)$ stands for a dependence on the 3D coordinates $t,x,y$ only.}
\begin{equation}
\begin{aligned}
    &A_I(x,z\sim 0)\sim\alpha_I^{(L)}(x)(1+...)+z\beta_I^{(L)}(x)(1+...)\\
    &A_I(x,z\sim \ell)\sim\alpha_I^{(R)}(x)(1+...)+(\ell-z)\beta_I^{(R)}(x)(1+...)
    \end{aligned}
\end{equation}
where the dots indicate terms of higher order in $z$ and $\ell-z$ for the left and right boundaries respectively.
We must now specify a set of boundary conditions. We will do so for the $\mu$ components of the gauge field; we will see that the Lorenz gauge condition then uniquely determines the boundary condition for $A_z$. There are two standard choices for the boundary conditions\cite{Aharony:2010ay,Hijano:2020szl,Witten:2003ya,Yee:2004ju,Marolf:2006nd}: at each boundary we can either fix $\alpha_\mu(x)$ (Dirichlet boundary conditions), or fix $\beta_\mu(x)$ (Neumann boundary conditions), up to a (residual) gauge transformation\footnote{The Dirichlet and Neumann boundary conditions are sometimes referred to as ``magnetic'' and ``electric'' boundary conditions, because they correspond to fixing $F_{\mu\nu}$ and $F_{\mu z}$ at the boundary, respectively (although this nomenclature is somewhat improper, $z$ being a spatial direction).}. For later convenience, let us introduce a deformation term in the dual CFT action, given by 
\begin{equation}
    I_{d}=i\int d^3x O^{(\alpha)}_\mu(x)J^{\mu}(x),
    \label{deformation}
\end{equation}
where the role of $O^{(\alpha)}_\mu(x)$ and $J_\mu(x)$ depends on the choice of boundary conditions and will be clarified below.

Imposing Dirichlet boundary conditions, which can be reformulated in a gauge-invariant way by fixing $F_{\mu\nu}|_{\partial}$, the boundary value $\alpha_\mu(x)$ becomes a non-dynamical source $O^{(\alpha)}_\mu(x)$ in the dual 3D CFT living on the corresponding boundary, which couples to a global conserved current $J^\mu(x)$ with dimension $\Delta=2$, as in equation (\ref{deformation}). The one-point function of $J_\mu(x)$ is related to the coefficient $\beta_\mu(x)$ of the subleading term of the gauge field at the boundary. This leads to the ``standard quantization'', where the CFT path integral computes the generating functional $Z[O^{(\alpha)}]=\braket{\exp\left(i\int d^3x O^{(\alpha)}_\mu(x) J^\mu(x)\right)}$ with fixed sources $O^{(\alpha)}_\mu(x)=\alpha_\mu(x)$. Note that, since the current $J^\mu$ is conserved, the generating functional is insensitive to a gauge transformation $\alpha_\mu\to \alpha_\mu+\partial_\mu\lambda$, which shows explicitly why we can fix $\alpha_\mu$ up to a gauge transformation. In the corresponding bulk theory, the part of the gauge field giving rise to the subleading boundary term proportional to $\beta_\mu(x)$ is quantized. The boundary limit of its bulk correlators are dual to boundary correlators of the current $J_\mu(x)$.

The second choice, whose gauge-invariant form is given by fixing $F_{\mu z}|_{\partial}$\footnote{As we will see below, the gauge-invariant form of the boundary conditions is not completely equivalent to just fixing $\beta_\mu(x)$. In fact, differently from the Dirichlet case, fixing $\beta_\mu(x)$ and then imposing the Lorenz gauge condition (\ref{lorenzgauge}) does not completely determine a boundary condition for $A_z$. On the other hand, the (physically meaningful) gauge-invariant condition fixes the boundary conditions for $A_z$ completely. As far as the $\mu$ component of the gauge field is concerned, the two boundary conditions are equivalent.}, leaves the value $\alpha_\mu(x)$ of the gauge field at the boundary free to fluctuate. This must then be identified with the expectation value of a boundary vector operator $\braket{O^{(\alpha)}_\mu(x)}$ of dimension $\Delta=1$. In the CFT path integral, we must integrate over the field $O^{(\alpha)}_\mu(x)$. $J_\mu(x)$ appearing in (\ref{deformation}) must now be regarded as an external source, determined by the value of $\beta_\mu(x)$, which is now fixed.\footnote{Note that in the classical case the field $O^{(\alpha)}_\mu(x)$ has no kinetic term in the dual CFT's lagrangian. Therefore, from equation (\ref{deformation}), the equations of motion for $O^{(\alpha)}_\mu(x)$ imply $\braket{J^\mu}=0$, constraining the value of $\beta_\mu(x)$ to zero \cite{Aharony:2010ay}. In general, quantum corrections can introduce a kinetic term \cite{Aharony:2010ay}, allowing arbitrary values of the current, and therefore of $\beta_\mu(x)$.} 
This leads to the ``alternative quantization'', where the CFT path integral computes a generating functional $Z[J]=\braket{\exp\left(i\int d^3x O^{(\alpha)}_\mu(x) J^\mu(x)\right)}$ with fixed sources $J_\mu(x)$ determined by $\beta_\mu(x)$. Once again, the generating functional is invariant under gauge transformations. In the bulk theory, the part of the gauge field giving rise to the leading boundary term $O^{(\alpha)}_\mu(x)$ is quantized. For additional discussion on these points, see \cite{Aharony:2010ay,Hijano:2020szl,Witten:2003ya,Yee:2004ju,Marolf:2006nd}.

So far we did not make any assumption about the relationship between boundary conditions at the left ($z=0$) and right ($z=\ell$) boundaries. Since the microscopic construction dual to our traversable wormhole is reflection symmetric and involves two copies of the same 3D CFT \cite{Antonini2022}, we will assume that boundary conditions at the two boundaries are the same: either we fix $\alpha_\mu^{(L)}(x),\alpha_\mu^{(R)}(x)$ (i.e. $F_{\mu\nu}|_{z=0,\ell}$), or we fix $\beta_\mu^{(L)}(x),\beta_\mu^{(R)}(x)$ (i.e. $F_{\mu z}|_{z=0,\ell}$). Note that, in more general settings, different boundary conditions can be chosen at the two boundaries. The boundary conditions for $A_z$, which we will derive using the Lorenz gauge condition, also need to be the same at the two boundaries. Therefore, we can write in general 
\begin{equation}
    A_I(x,z)=A_I^{(1)}(x,z)+A_I^{(2)}(x,z)
    \label{fullfield}
\end{equation}
where 
\begin{equation}
    \begin{aligned}
    &A_I^{(1)}(x,z\sim 0)\sim\alpha_I^{(L)}(x), \hspace{1cm} A_I^{(1)}(x,z\sim \ell)\sim\alpha_I^{(R)}(x)\\
    &A_I^{(2)}(x,z\sim 0)\sim z\beta_I^{(L)}(x), \hspace{1cm} A_I^{(2)}(x,z\sim \ell)\sim (\ell-z)\beta_I^{(R)}(x)
    \end{aligned}
\end{equation}
where we omitted subleading corrections in $z$ and $\ell-z$. The standard quantization corresponds then to fixing $A_\mu^{(1)}(x,z)$ at $z=0,\ell$, and quantizing $A_\mu^{(2)}(x,z)$. The alternative quantization corresponds to fixing the value of $A_\mu^{(2)}(x,z)/z$ at $z=0$ and $A_\mu^{(2)}(x,z)/(\ell-z)$ at $z=\ell$, and quantizing $A_\mu^{(1)}(x,z)$. We can then write
$A_I^{(1)}(x,z)$ and $A_I^{(2)}(x,z)$ in the general form
\begin{equation}
    \begin{aligned}
   &A_I^{(1)}(x,z)=\sum_{n=0}^\infty\int \frac{dk_xdk_y}{2\pi\sqrt{\ell}\sqrt{\omega_n}\sqrt{1+\delta_{n,0}}}\left[\varepsilon_I^{(1)}(k_x,k_y,\rho_n)\cos(\rho_n z)\textrm{e}^{-i(\omega_n t-k_x x-k_y y)}+c.c.\right]\\
   &A_I^{(2)}(x,z)=\sum_{n=1}^\infty\int \frac{dk_xdk_y}{2\pi\sqrt{\ell}\sqrt{\omega_n}}\left[\varepsilon_I^{(2)}(k_x,k_y,\rho_n)\sin(\rho_n z)\textrm{e}^{-i(\omega_n t-k_x x-k_y y)}+c.c.\right]
    \end{aligned}
    \label{expansions}
\end{equation}
where the factor $\sqrt{1+\delta_{n,0}}$ is necessary to guarantee the normalization of the $n=0$ mode, $\omega_n=\sqrt{k_x^2+k_y^2+\rho_n^2}$, $\rho_n=n\pi/\ell$ is the momentum in the $z$ direction (whose discrete values are determined by the requirement of having the same kind of boundary conditions (Dirichlet or Neumann) at the two boundaries), and $\epsilon_I$ is the polarization 4-vector.

The gauge field (\ref{fullfield}) must satisfy the Lorenz gauge condition (\ref{lorenzgauge}). Using the expansion (\ref{expansions}), the Lorenz gauge condition takes the form
\begin{equation}
    \begin{cases}
    i\varepsilon_\mu^{(1)}k^\mu+\varepsilon_z^{(2)}\rho_n=0\\
    i\varepsilon_\mu^{(2)}k^\mu-\varepsilon_z^{(1)}\rho_n=0.
    \end{cases}
    \label{lorenzmodes}
\end{equation}
Fixing Dirichlet (Neumann) boundary conditions on $A_\mu$ corresponds to fixing the polarization vectors $\varepsilon_\mu^{(1)}$ ($\varepsilon_\mu^{(2)}$) up to a residual gauge transformation $\varepsilon_\mu^{(1)}\to \varepsilon_\mu^{(1)}+C_1k_\mu$ ($\varepsilon_\mu^{(2)}\to \varepsilon_\mu^{(2)}+C_2k_\mu$). The condition (\ref{lorenzmodes}) then implies that, for Dirichlet boundary conditions on $A_\mu$, all the $\varepsilon^{(2)}_z$ are also fixed up to a residual gauge transformation $\varepsilon_z^{(2)}\to\varepsilon_z^{(2)}+i\rho_n$. In other words, $\beta_z(x)$ (the subleading part of $A_z$ at the boundary) is fixed: Neumann boundary conditions are imposed on $A_z$.

On the other hand, for Neumann boundary conditions on $A_\mu$, all the $\varepsilon_z^{(1)}$ with $n\geq 1$ are fixed up to a gauge transformation by the condition (\ref{lorenzmodes}), while $\varepsilon_z^{(1)}(k_x,k_y,\rho_0)$ is unconstrained. However, it is easy to show that, choosing the correct gauge-invariant boundary condition (i.e. fixing $F_{\mu z}|_{z=0,\ell}$ instead of $\beta_\mu(x)$ only), $\varepsilon_z^{(1)}(k_x,k_y,\rho_0)$ is also constrained. Therefore, the second choice of boundary condition is given by Neumann on $A_\mu$ and Dirichlet on $A_z$. 

Now that we have specified an appropriate set of boundary conditions at the two boundaries, we can proceed to the quantization of the bulk gauge field. In the following, we will keep referring to Dirichlet and Neumann boundary conditions as determined by the boundary condition imposed on $A_\mu$. 

\subsection{Dirichlet boundary conditions: standard quantization}

For Dirichlet boundary conditions on $A_\mu$ and Neumann on $A_z$, the field we must quantize is given by
\begin{equation}
    A_\mu^D(x,z)=A_\mu^{(2)}(x,z), \hspace{2cm} A_z^D(x,z)=A_z^{(1)}(x,z)
\end{equation}
where $A_I^{(1,2)}$ are defined in equation (\ref{expansions}). The Lorenz condition does not fix the gauge completely. In particular, we have a residual gauge freedom
\begin{equation}
    A_I^D\to A_I^D+\partial_I \lambda_D
\end{equation}
where we can write
\begin{equation}
    \lambda_D=\sum_{n=1}^\infty \int \frac{dk_xdk_y}{2\pi\sqrt{\ell}}C_D(k_x,k_y,\rho_n)\sin(\rho_n z)\textrm{e}^{-i(\omega_n t-k_xx-k_yy)}
\end{equation}
where $C_D(k_x,k_y,\rho_n)$ can be chosen arbitrarily. We can use such residual gauge freedom to set $\varepsilon_z^{(1)}(k_x,k_y,\rho_n)=0$ for all $n\geq 1$. Note that $\varepsilon_z^{(1)}(k_x,k_y,\rho_0)$ cannot be set to zero by such a residual gauge transformation. Normalization for the resulting 3D massless scalar field imposes $\varepsilon_z^{(1)}(k_x,k_y,\rho_0)=1$. 
The Lorenz gauge condition then reduces to $\varepsilon_\mu^{(2)}k^\mu=0$, which removes one degree of freedom in the $\mu$ components of the gauge field. We can define an orthonormal basis of polarization 3-vectors $\{\epsilon_\mu^{(j)}\}|_{j=a,b}$ satisfying such condition:
\begin{equation}
    \begin{aligned}
        &\tilde{\epsilon}_\mu^{(a)}(k,\rho_n)=\frac{1}{\sqrt{k_x^2+k_y^2}}(0,-k_y,k_x)\\
        &\tilde{\epsilon}_\mu^{(b)}(k,\rho_n)=\frac{1}{\rho_n\sqrt{k_x^2+k_y^2}}(-(k_x^2+k_y^2),\omega_n k_x,\omega_n k_y)
    \end{aligned}
    \label{polbasis}
\end{equation}
with $n\geq 1$. We are now left with only physical degrees of freedom and can finally quantize the field:
\begin{equation}
    \begin{aligned}
        &\hat{A}_\mu^D=\sum_{n=1}^\infty \int \frac{dk_xdk_y}{2\pi\sqrt{\omega_n}\sqrt{\ell}}\sum_{j=a,b}\left[\tilde{\varepsilon}_\mu^{(j)}(k,\rho_n)\sin(\rho_n z)\textrm{e}^{-(i\omega_n t-k_xx-k_yy)}\hat{a}^{(j)}_{k,\rho_n}+h.c.\right]\\
        &\hat{A}_z^D=\int\frac{dk_xdk_y}{2\pi\sqrt{2\omega_0}\sqrt{\ell}}\left[\textrm{e}^{-(i\omega_0 t-k_xx-k_yy)}\hat{a}_{k}+h.c.\right]
    \end{aligned}
\end{equation}
where the creation and annihilation operators $\hat{a}$, $\hat{a}^\dagger$ satisfy the canonical commutation relations. For the physical interpretation of this result, see Section \ref{sec:gauge_field}.

\subsection{Neumann/electric boundary conditions: alternative quantization}

For Neumann boundary conditions, where we fix $F_{\mu z}|_{z=0,\ell}$, the field we must quantize is
\begin{equation}
    A_\mu^N(x,z)=A_\mu^{(1)}(x,z), \hspace{2cm} A_z^N(x,z)=A_z^{(2)}(x,z).
\end{equation}
The residual gauge freedom is now given by
\begin{equation}
    A_I^N\to A_I^N +\partial_I \lambda_N
\end{equation}
with 
\begin{equation}
    \lambda_N=\sum_{n=0}^\infty \int \frac{dk_xdk_y}{2\pi\sqrt{\ell}}C_N(k_x,k_y,\rho_n)\cos(\rho_n z)\textrm{e}^{-i(\omega_n t-k_xx-k_yy)}.
\end{equation}
Using this residual gauge freedom, we can set $\varepsilon^{(2)}_z(k,\rho_n)=0$ for all $\rho_n\geq 1$, and $\varepsilon^{(1)}_0(k,\rho_0)=0$. After fixing the residual gauge, the Lorenz condition reduces to $\varepsilon^{(1)}_\mu k^\mu=0$. For all the modes with $n\geq 1$, we can use the basis of polarization vectors (\ref{polbasis}). Now we also have to define an additional polarization vector for the $n=0$ mode of $A_\mu^N$. Having set $\varepsilon^{(1)}_0(k,\rho_0)=0$, the Lorenz gauge condition for this mode reads $k^i\varepsilon^{(1)}_i(k,\rho_0)=0$ with $i=1,2$. The unique unit vector satisfying such condition is
\begin{equation}
    \tilde{\varepsilon}^{(0)}_\mu(k)=\frac{1}{\sqrt{k_x^2+k_y^2}}(0,-k_y,k_x).
\end{equation}
Now that all the non-physical degrees of freedom have been eliminated, we can quantize the field:
\begin{equation}
\begin{aligned}
        &\hat{A}_\mu^N=\int \frac{dk_xdk_y}{2\pi\sqrt{2\omega_0}\sqrt{\ell}}\left[\tilde{\varepsilon}_\mu^{(0)}(k)\textrm{e}^{-(i\omega_0 t-k_xx-k_yy)}\hat{a}^{(j)}_{k}+h.c.\right]\\
        &+\sum_{n=1}^\infty \int \frac{dk_xdk_y}{2\pi\sqrt{\omega_n}\sqrt{\ell}}\sum_{j=a,b}\left[\tilde{\varepsilon}_\mu^{(j)}(k,\rho_n)\cos(\rho_n z)\textrm{e}^{-(i\omega_n t-k_xx-k_yy)}\hat{a}^{(j)}_{k,\rho_n}+h.c.\right].
\end{aligned}
\end{equation}
For the physical interpretation of this result, see Section \ref{sec:gauge_field}.

\section{Derivation of the asymptotic spectrum}\label{app:asymp_spectrum}

Consider the Schr\"odinger equation with the ``test'' potential
\begin{equation}
    \Vt(z) \equiv
    \frac{\alpha}{(z_0-|z|)^2}+\frac{\beta}{z_0-|z|}+c
    \label{test}
\end{equation}
The potential (\ref{test}) is simple enough that we can explicitly compute the asymptotic form of its normalizable spectrum, $\lamt_j$. Indeed, a simple rescaling of $z$ reduces the corresponding Schr\"odinger equation to a Whittaker's differential equation, whose solution is well known \cite{colton_1970}. First, we note the basic result from Sturm-Liouville theory that since the potential $\Vt(z)$ is even, the normalizable eigenfunctions alternate between even functions ($u_0,u_2,\dots$) and odd functions ($u_1,u_3,\dots$). Let us consider the odd eigenfunctions, which are defined, up to normalization, by the condition $u(0)=0$. The solution to the Schr\"odinger equation with this ``initial" condition at $z=0$ can be obtained from the solution of the Whittaker's differential equation by an appropriate rescaling of $z$ \cite{colton_1970}. The general solution is given by the linear combination
\begin{equation}
    u_\text{odd}(z,\lamt)= C_1 M_{x,\frac{\mu}{2}}\left(i2(z_0-|z|)\sqrt{\tilde{\lambda}-c}\right) + C_2 M_{x,-\frac{\mu}{2}}\left(i2(z_0-|z|)\sqrt{\tilde{\lambda}-c}\right)
    \label{whittakersol}
\end{equation}
where $M_{x,\frac{\mu}{2}}(y)$ is related to the confluent hypergeometric function of the first kind ${}_1F_1(a;b;y)$ by
\begin{equation}
    M_{x,\frac{\mu}{2}}(y)=y^{\frac{1+\mu}{2}}\textrm{e}^{-\frac{y}{2}}{}_1F_1\left(\frac{1+\mu}{2}-x;1+\mu;y\right)
\end{equation}
and we identified
\begin{equation}
    x=\frac{i\beta}{2\sqrt{\tilde{\lambda}-c}};\hspace{1.5cm} \mu=2\bar{\Delta};\hspace{1.5cm} y=i2(z_0-|z|)\sqrt{\tilde{\lambda}-c}
    \label{hyperparameters}
\end{equation}
where $\bar{\Delta}=\sqrt{1+4\alpha}/2$. The first term in the linear combination (\ref{whittakersol}) corresponds to the normalizable component of the eigenfunctions, while the second term is the non-normalizable component. Since we are interested in the normalizable spectrum of the potential (\ref{test}), we set $C_2=0$.

Then the odd normalizable eigenvalues $\lamt_j$, $j=1,3,5...$, are given by values of $\lamt$ that solve the odd condition $u_{odd}(0,\tilde{\lambda}_j)=0$.\footnote{For $-1/4<a<0$ the non-normalizable modes also vanish at $z=z_0$. To avoid this difficulty, here we are assuming that $a>0$. The results for the case $a<0$ follow by analytic continuation.} Since we are interested in the asymptotic spectrum, we want to look for solutions $\{\tilde{\lambda}_j\}$ of the odd condition with $\tilde{\lambda}_j\gg 1$.  Therefore, we can set $z=0$ and find the form of the normalizable eigenfunction (first term in equation (\ref{whittakersol})) in the limit of large $\tilde{\lambda}$. Defining $x=i\tau$, $y=i\zeta$ and assuming $\tau\in \mathbb{R}^+$, $\zeta\in\mathbb{R}\backslash \{0\}$, $\mu\in\mathbb{R}$, the function $M_{i\tau,\frac{\mu}{2}}(i\zeta)$ has the following asymptotic expansion for large $\zeta$ \cite{colton_1970}:
\begin{equation}
\begin{aligned}
    M_{i\tau,\frac{\mu}{2}}&(i\zeta)\sim \frac{2\Gamma(1+\mu)\exp\left[\frac{\pi}{2}\left(\tau+i\frac{1+\mu}{2}\right)\right]\textrm{sgn}(\zeta)}{\left|\Gamma\left(\frac{1+\mu}{2}\pm i\tau\right)\right|}\cdot\\[15pt]
    &\left\{\cos\left[-\tau\log|\zeta|+\frac{\zeta}{2}+\delta-\frac{\pi}{4}(1+\mu)\textrm{sgn}(\zeta) \right]\sum_{k=0}^N \frac{\left(\frac{\tau^2}{\zeta}\right)^k}{k!}\frac{\cos\left[\sum_{r=1}^k(\phi_r^{(+)}+\phi_r^{(-)})-\frac{\pi}{2} k\right]}{\prod_{r=1}^k\left(\sin\phi_r^{(+)}\sin\phi_r^{(-)}\right)}\right.\\[15pt]
    &\left.-\sin\left[-\tau\log|\zeta|+\frac{\zeta}{2}+\delta-\frac{\pi}{4}(1+\mu)\textrm{sgn}(\zeta) \right]\sum_{k=1}^N \frac{\left(\frac{\tau^2}{\zeta}\right)^k}{k!}\frac{\sin\left[\sum_{r=1}^k(\phi_r^{(+)}+\phi_r^{(-)})-\frac{\pi}{2} k\right]}{\prod_{r=1}^k\left(\sin\phi_r^{(+)}\sin\phi_r^{(-)}\right)}\right\}
    \label{asymptexp}
\end{aligned}
\end{equation}
where $\Gamma(x)$ is the Euler gamma function, and $\delta$, $\phi_r^{(\pm)}$ are given by
\begin{equation}
    \delta=-\frac{i}{2}\log\left[\frac{\Gamma\left(\frac{1+\mu}{2}+ i\tau\right)}{\Gamma\left(\frac{1+\mu}{2}- i\tau\right)}\right]; \hspace{1.5cm} \tan\phi^{(\pm)}_r=\frac{\tau}{r+\frac{-1\pm\mu}{2}}.
\end{equation}
Keeping only terms up to order $1/\sqrt{\tilde{\lambda}-c}$ in equation (\ref{asymptexp}), we get (up to an irrelevant multiplicative factor)
\begin{equation}
\begin{aligned}
    M_{i\tau,\frac{\mu}{2}}(i\zeta)\sim & \cos\left[-\tau\log|\zeta|+\frac{\zeta}{2}+\delta-\frac{\pi}{4}(1+\mu)\textrm{sgn}(\zeta) \right]\\[15pt]
    &+\frac{1-\mu^2}{4\zeta}\sin\left[-\tau\log|\zeta|+\frac{\zeta}{2}+\delta-\frac{\pi}{4}(1+\mu)\textrm{sgn}(\zeta) \right]+\mathcal{O}\left(\frac{1}{\tilde{\lambda}-c}\right).
\end{aligned}
\label{asymptexp2}
\end{equation}
The discrete set of asymptotic odd eigenvalues $\{\tilde{\lambda}_j\}$ (for odd values of $j\gg 1$) we are looking for is determined by the zeros of equation (\ref{asymptexp2}). Using equation (\ref{hyperparameters}), such zeros are given by the $\tilde{\lambda}$'s satisfying
\begin{equation}
\begin{aligned}
    \sqrt{\tilde{\lambda}-c}=&-\frac{\alpha}{2z_0^2\sqrt{\tilde{\lambda}-c}}+\frac{\beta}{2z_0\sqrt{\tilde{\lambda}-c}}\log\left(2z_0\sqrt{\tilde{\lambda}-c}\right)+\frac{i}{2z_0}\log\left[\frac{\Gamma\left(\Delta_++\frac{i\beta}{2\sqrt{\tilde{\lambda}-c}}\right)}{\Gamma\left(\Delta_+-\frac{i\beta}{2\sqrt{\tilde{\lambda}-c}}\right)}\right]\\
    &+\frac{\pi}{2z_0}(\Delta_++2n-1)
    \end{aligned}
    \label{zeros}
\end{equation}
with $\Delta_+=(1+2\bar{\Delta})/2$ and $n=1,2,3...$. Substituting the expansion ansatz
\begin{equation}
    \sqrt{\tilde{\lambda}-c}=a_1n+a_2\log n+a_3+a_4\frac{\log n}{n}+\frac{a_5}{n}+\mathcal{O}\left(\frac{1}{n^2}\right)
    \label{ansatzlambda}
\end{equation}
into equation (\ref{zeros}) and expanding the right hand side up to order $1/n$ yields the values of the $\{a_i\}_{i=1,...,5}$. Solving equation (\ref{ansatzlambda}) for $\tilde{\lambda}$ and identifying $n=(j+1)/2$ with $j$ odd, we finally obtain the expression for the asymptotic spectrum:
\begin{align}
    \lamt_j\sim &
    \left(\frac{\pi}{2z_0}\right)^2 j^2+
    \left(\frac{\pi}{2z_0}\right)^2 2\Delta_+ j+
    \frac{\beta}{z_0}\log(j)\notag\\
    &
    +\left(\frac{\pi}{2z_0}\right)^2\left(\Delta_+^2-\frac{4\alpha}{\pi^2}\right)
    +\frac{\beta}{z_0}\left[\log\left(\pi\right)-\psi(\Delta_+)\right]+c,
\end{align}
where $\psi(x)$ is the digamma function. Although we have derived this expansion for $j$ odd, it can immediately be extended also to even values of $j$, yielding the full asymptotic spectrum (\ref{eq:fitformula}).

\section{Proof of lemma}\label{app:lemma}

Here we prove the lemma used in the reconstruction algorithm in Section \ref{sec:inverse_SL_problem}.

\begin{lemma} 
For any $L^2$ integrable function $f(z)$ on $z\in(-z_0,z_0)$ the following identity holds:
\begin{align}\label{eq:modified_SL_expansionapp}
    f(z)
    =
    \sum_{j=0}^\infty
    \frac{\vt_j(z)\int_{-z_0}^z dy\,u_j(y) f(y)+\ut_j(z)\int_{z}^{z_0} dy\,v_j(y) f(y)}{\omega'(\lam_j)}.
\end{align}
\end{lemma}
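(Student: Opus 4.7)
The plan is to prove \eqref{eq:modified_SL_expansionapp} by a contour integration in the spectral parameter $\lam$. I would begin by introducing the meromorphic function
\begin{equation*}
\Phi(\lam;z) \;:=\; \frac{\vt(z,\lam)\int_{-z_0}^{z}u(y,\lam)f(y)\,dy\;+\;\ut(z,\lam)\int_{z}^{z_0}v(y,\lam)f(y)\,dy}{\omega(\lam)}.
\end{equation*}
For every fixed $z,y\in(-z_0,z_0)$ the four functions $u(\cdot,\lam),v(\cdot,\lam),\ut(\cdot,\lam),\vt(\cdot,\lam)$ depend entirely on $\lam$ by the standard analytic dependence of solutions of linear ODEs on a parameter, so the singularities of $\Phi$ in $\lam$ come only from zeros of $\omega$. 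By the definition \eqref{eq:omega} and the characterization of the normalizable spectrum, these zeros are exactly the simple points $\{\lam_j\}$; a direct computation then gives
\begin{equation*}
\mathrm{Res}_{\lam=\lam_j}\Phi(\lam;z)\;=\;\frac{\vt_j(z)\int_{-z_0}^{z}u_j(y)f(y)\,dy\;+\;\ut_j(z)\int_{z}^{z_0}v_j(y)f(y)\,dy}{\omega'(\lam_j)},
\end{equation*}
which reproduces the $j$-th summand of \eqref{eq:modified_SL_expansionapp}.

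The next step is to apply the residue theorem on circles $\Gamma_R=\{|\lam|=R\}$ chosen to avoid the spectrum, yielding $\sum_{\lam_j\in\mathrm{int}\,\Gamma_R}\mathrm{Res}_{\lam_j}\Phi=(2\pi i)^{-1}\oint_{\Gamma_R}\Phi\,d\lam$. To evaluate the limit $R\to\infty$ I would compare $\Phi$ with the analogous ``test'' object
\begin{equation*}
\tilde\Phi(\lam;z) \;:=\; \frac{\vt(z,\lam)\int_{-z_0}^{z}\ut(y,\lam)f(y)\,dy\;+\;\ut(z,\lam)\int_{z}^{z_0}\vt(y,\lam)f(y)\,dy}{\tilde\omega(\lam)},
\end{equation*}
which, up to the sign from the derivative jump, is precisely the integral kernel of $(\tilde H-\lam)^{-1}$ applied to $f$, where $\tilde H=-\partial_z^2+\Vt$. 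Standard resolvent asymptotics give $\tilde\Phi(\lam;z)\sim f(z)/\lam$ uniformly on $\Gamma_R$, and the completeness of the normalizable eigenfunctions of $\tilde H$ --- explicit here because $\Vt$ admits exact Whittaker-function solutions, as used in Appendix \ref{app:asymp_spectrum} --- implies $(2\pi i)^{-1}\oint_{\Gamma_R}\tilde\Phi\,d\lam\to f(z)$.

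The final ingredient is the comparison $\Phi-\tilde\Phi\to 0$ on $\Gamma_R$ fast enough that the contour integral of the difference vanishes. Since $V$ and $\Vt$ share the singular expansion $\alpha/(z_0-|z|)^{2}+\beta/(z_0-|z|)$ at the endpoints with identical coefficients, $V-\Vt$ is bounded on $(-z_0,z_0)$. Expressing $u-\ut$ and $v-\vt$ through variation-of-parameters integrals against the Green's kernel built from $\ut,\vt$ then yields pointwise estimates of the form $u=\ut\bigl(1+O(|\lam|^{-1/2})\bigr)$ and $v=\vt\bigl(1+O(|\lam|^{-1/2})\bigr)$, together with $\omega(\lam)=\tilde\omega(\lam)\bigl(1+O(|\lam|^{-1/2})\bigr)$, so that $\Phi=\tilde\Phi+O(|\lam|^{-3/2})$ uniformly on $\Gamma_R$. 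Taking the limit then yields $f(z)=\sum_j\mathrm{Res}_{\lam_j}\Phi$, which is \eqref{eq:modified_SL_expansionapp}. The main obstacle is precisely this uniform large-$|\lam|$ control of $\ut,\vt,\tilde\omega$ in the presence of the $(z_0-|z|)^{-2}$ singularities, where naive WKB fails at the boundaries; my approach is to use the exact Whittaker representation underlying Appendix \ref{app:asymp_spectrum}, together with the classical uniform asymptotics of Whittaker functions in the parameter, to propagate the pointwise estimates across the interval and to justify exchanging the residue sum with the contour integral.
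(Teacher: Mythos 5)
Your proposal is correct and follows essentially the same route as the paper's own proof in Appendix \ref{app:lemma}: the identical meromorphic function $\Phi$, the residue computation at the zeros of $\omega$, and the large-$|\lam|$ comparison with the test-potential problem whose Whittaker-function solutions yield the ordinary Sturm--Liouville expansion of $f$. The only difference is one of rigor rather than strategy --- you make explicit (via the resolvent identification of $\tilde\Phi$, the variation-of-parameters bounds $u=\ut\bigl(1+O(|\lam|^{-1/2})\bigr)$, and contours avoiding the spectrum) the uniform estimates that the paper invokes heuristically when it ``effectively replaces'' $V$ by $\Vt$ at large $|\lam|$.
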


\begin{proof}
Let $f(z)$ be an $L^2$ integrable function. We define
\begin{align}
    \Phi(z,\lam)
    \equiv
    \frac{\vt(z,\lam)\int_{-z_0}^z dy\,u(y,\lam) f(y)+\ut(z,\lam)\int_{z}^{z_0} dy\,v(y,\lam) f(y)}{\omega(\lam)}.
\end{align}
Notice that this is well-defined because $u(y,\lam)$ is integrable at the left boundary, while $v(y,\lam)$ is integrable at the right boundary. Now consider the contour integral
\begin{align}
    \frac{1}{2\pi i}\int_\Gamma d\lam\, \Phi(z,\lam),
\end{align}
where $\Gamma$ is a large circle in the complex $\lam$ plane, whose radius we take to infinity. In this limit, we can evaluate the contour integral in two different ways. First, directly by the residue theorem we find
\begin{align}\label{eq:Phi_integral_1}
    \frac{1}{2\pi i}\int_\Gamma d\lam\, \Phi(z,\lam)
    =
    \sum_{j=0}^\infty
    \frac{\vt_j(z)\int_{-z_0}^z dy\,u_j(y) f(y)+\ut_j(z)\int_{z}^{z_0} dy\,v_j(y) f(y)}{\omega'(\lam_j)},
\end{align}
where $\lam_j$ are eigenvalues of \eqref{eq:SE2} with normalizable boundary conditions, and where we define $u_j(z)\equiv u(z,\lam_j)$, $\ut_j(z)\equiv \ut(z,\lam_j)$, and similarly for the $v$s.

Alternatively, we can evaluate the above contour integral by first going to the limit of large $|\lam|$ (since we are taking the radius of the contour $\Gamma$ to infinity) before using the residue theorem. As we discussed, in the limit $|\lam|\rightarrow\infty$ we can effectively replace $V(z)$ with $\Vt(z)$, and hence $u\rightarrow \ut$,  $v\rightarrow\vt$ and $\omega(\lam)\rightarrow\tilde\omega(\lam)\equiv-\utm(z_0,\lam)$. Using the residue theorem we thus obtain
\begin{align}\label{eq:contour}
    \frac{1}{2\pi i}\int_\Gamma d\lam\, \Phi(z,\lam)
    =
    \sum_{j=0}^\infty
    \frac{\Vt_j(z)\int_{-z_0}^z dy\,\Ut_j(y) f(y)+\Ut_j(z)\int_{z}^{z_0} dy\,\Vt_j(y) f(y)}{\tilde\omega'(\lamt_j)},
\end{align}
where $\Ut_j$ and $\Vt_j$ are solutions of the following initial value problems on $z\in(-z_0,z_0)$ 
\begin{align} 
\Ut_j:
    \begin{cases}
    \Ut_j''(z)=\big[\Vt(z)-\lamt_j\big]\Ut_j(z)\\
    \Ut_j^{(+)}(-z_0)=\frac{1}{2\Dbar}\\
    \Ut_j^{(-)}(-z_0)=0
    \end{cases}
\Vt_j:
    \begin{cases}
    \Vt_j''(z)=\big[\Vt(z)-\lamt_j\big]v_j(z)\\
    \Vt_j^{(+)}(+z_0)=\frac{1}{2\Dbar}\\
    \Vt_j^{(-)}(+z_0)=0
    \end{cases}.
\end{align}
Hence $\Ut_j$ and $\Vt_j$ must both be proportional to the $j$-th normalizable eigenfunction of $u''(z)=\big[\Vt(z)-\lam\big]u(z)$. Since $\Vt(z)$ is $\mathbb Z_2$ symmetric, these eigenfunctions have a parity $(-1)^j$, and thus $\Ut_j(z)=(-1)^j\Vt_j(z)$. Additionally, by using the explicit form \ref{whittakersol} for the normalizable eigenfunctions, it can be verified by direct calculation that $\tilde\omega'(\lamt_j)=(-1)^j||U_j||^2$, where $||\cdot||^2$ denotes the $L^2$-norm, i.e. the norm associated with the Sturm-Liouville inner product. It was in order to obtain this identity that we chose the normalization factor $1/2\bar\Delta$ for the normalizable modes in Eqs. \eqref{eq:u_and_v} and \eqref{eq:ut_and_vt}. Therefore \eqref{eq:contour} gives
\begin{align}
    \frac{1}{2\pi i}\int_\Gamma d\lam\, \Phi(z,\lam)
    =
    \sum_{j=0}^\infty
    \frac{\Ut_j(z)\int_{-z_0}^{z_0} dy\,\Ut_j(y) f(y)}{||U_j||^2},
\end{align}
which is precisely the Sturm-Liouville expansion of $f(z)$. Comparing with \eqref{eq:Phi_integral_1} we obtain the identity
\begin{align}
    f(z)
    =
    \sum_{j=0}^\infty
    \frac{\vt_j(z)\int_{-z_0}^z dy\,u_j(y) f(y)+\ut_j(z)\int_{z}^{z_0} dy\,v_j(y) f(y)}{\omega'(\lam_j)},
\end{align}
which completes the proof.
\end{proof}

\newpage

\bibliographystyle{jhep}
\bibliography{references}

\end{document}